\newcommand{\mcp}{\mathcal{P}}
\newcommand{\mcq}{\mathcal{Q}}
\newcommand{\R}{\mathbb{R}}
\newcommand*{\email}[1]{\texttt{#1}}
\newcommand{\knote}[1]{{\color{red}[{\tiny Karthik: \textbf{#1}}]\marginpar{\color{red}*}}}
\newcommand{\wnote}[1]{{\color{blue}[{\tiny Weihang: \textbf{#1}}]\marginpar{\color{blue}*}}}
\newtheorem{theorem}{Theorem}[section]
\newtheorem{lemma}{Lemma}[section]
\newtheorem{proposition}{Proposition}[section]
\newtheorem{claim}{Claim}[section]
\newtheorem{remark}{Remark}[section]
\title{Approximating submodular $k$-partition via principal partition sequence\footnote{University of Illinois, Urbana-Champaign. Email: \email{\{karthe, weihang3\}@illinois.edu}. Supported in part by NSF grants CCF-1814613 and  CCF-1907937.}}\author{Karthekeyan Chandrasekaran \and Weihang Wang}
\date{}
\begin{document}

\maketitle
\begin{abstract}
In submodular $k$-partition, the input is a submodular function $f:2^V\rightarrow \R_{\ge 0}$ (given by an evaluation oracle) along with a positive integer $k$ and the goal is to find a partition of the ground set $V$ into $k$ non-empty parts $V_1, V_2, \ldots, V_k$ in order to minimize $\sum_{i=1}^k f(V_i)$.  
Narayanan, Roy, and Patkar \cite{NRP96} designed an algorithm for submodular $k$-partition based on the principal partition sequence and showed that the approximation factor of their algorithm is $2$ for the special case of graph cut functions (which was subsequently rediscovered by Ravi and Sinha \cite{RS07}). 
In this work, we study the approximation factor of their algorithm for three subfamilies of submodular functions---namely monotone, symmetric, and posimodular and show the following results:  
\begin{enumerate}
\item The approximation factor of their algorithm for monotone submodular $k$-partition is 
$4/3$. 
This result improves on the $2$-factor that was known to be achievable for monotone submodular $k$-partition via other algorithms. 
Moreover, our upper bound of $4/3$ matches the recently shown lower bound 
under 
polynomial number of function evaluation queries \cite{San21}. Our upper bound of $4/3$ is also the first improvement beyond $2$  for a certain graph partitioning problem that is a special case of monotone submodular $k$-partition. 

\item The approximation factor of their algorithm for symmetric submodular $k$-partition is $2$. This result generalizes their approximation factor analysis beyond graph cut functions. 

\item The approximation factor of their algorithm for posimodular submodular $k$-partition is $2$. 
\end{enumerate}
We also construct an example to show that the approximation factor of their algorithm for arbitrary submodular functions is $\Omega(n/k)$.
\end{abstract}

\section{Introduction}\label{sec:intro}
A set function $f:2^V\rightarrow \R$ is submodular if $f(A)+f(B)\ge f(A\cap B) + f(A\cup B)$ for every $A, B\subseteq V$. 
An evaluation oracle for a function $f:2^V\to\R$ takes a subset $A\subseteq V$ as input and returns $f(A)$. 
We consider the \emph{submodular $k$-partition} problem defined as follows: The input consists of a non-negative submodular function $f:2^V\to\R_{\geq 0}$ on a finite ground set $V$ via an evaluation oracle and an integer $k\geq 2$. The goal is to find a partition $V_1, V_2, \ldots,V_k$ of $V$ into $k$ non-empty parts in order to minimize $\sum_{i=1}^k f(V_i)$. Namely, the goal is to compute
\[\min\left\{\sum_{i\in[k]}f(V_i):\ V_1, V_2, \ldots, V_k\text{ is a partition of }V,\ V_i\neq\emptyset\;\forall i\in[k]\right\}.\]
Throughout, we will assume that the input submodular function is non-negative and denote the size of the ground set $V$ by $n$. 
If $k=2$, then the problem reduces to the classic submodular minimization problem. 
We emphasize that our focus is on submodular $k$-partitioning when $k$ is part of input 
(see \cite{CC22-j} for a discussion of the problem for fixed constant $k$).
Submodular $k$-partition formulates several interesting partitioning problems and we will discuss some of the special cases below. 
For arbitrary submodular functions, the problem is NP-hard \cite{GH94}, does not admit a $(2-\epsilon)$-approximation assuming polynomial number of function evaluation queries \cite{San21}, does not admit a $n^{1/(\log\log n)^c}$-approximation for every constant $c$ assuming the Exponential Time Hypothesis \cite{CL20}, and 
the best approximation factor that is known is $O(k)$ \cite{ZNI05, OFN12}. 

In this work, we will be interested in the submodular $k$-partition problem for subfamilies of submodular functions---namely monotone, symmetric, and posimodular submodular functions. 
A set function $f:2^V\rightarrow \R$ is 
\begin{enumerate}
    \item monotone if $f(B)\ge f(A)$ for every $A\subseteq B\subseteq V$, 
    \item symmetric if $f(A) = f(V-A)$ for every $A\subseteq V$, and 
    \item posimodular if $f(A)+f(B)\ge f(A- B) + f(B-A)$ for every $A, B\subseteq V$. 
\end{enumerate}
If the input submodular function is monotone/symmetric/posimodular, then we call the associated submodular $k$-partition problem as monotone/symmetric/posimodular submodular $k$-partition. 
 We note that monotone submodular functions and symmetric submodular functions are also posimodular\footnote{In fact, monotone functions are posimodular since $f(A)\ge f(A-B)$ and $f(B)\ge f(B-A)$ for every $A, B\subseteq V$. Symmetric submodular functions are posimodular: for every $A, B\subseteq V$, we have that $f(A) + f(B) = f(V-A) + f(B)\ge f((V-A)\cup B) + f((V-A)\cap B)=f(V-(A-B))+f(B-A) =f(A-B)+f(B-A)$.}.  Hence, posimodular submodular $k$-partition problem generalizes both monotone submodular $k$-partition and symmetric submodular $k$-partition problems.  
 We now discuss the approximation status of symmetric/monotone/posimodular submodular $k$-partition and some of their well-known special cases  
(see Table \ref{tab: approximation factors} for a summary of approximation factors of symmetric/monotone/posimodular submodular $k$-partition achieved by different approaches).


\vspace{2mm}
\noindent \textbf{Monotone submodular $k$-partition.} 
 Special cases of monotone submodular $k$-partition problem include matroid $k$-partition and coverage $k$-partition---the submodular functions of interest here are matroid rank functions and coverage functions respectively. Matroid $k$-partition captures several interesting problems: e.g., (1) partition the columns of a given matrix into $k$ non-empty parts to minimize the sum of the dimension of the subspace spanned by the parts and (2) partition the edges of a given graph into $k$ non-empty parts to \emph{maximize} the sum of the number of connected components formed by the parts. 
 Coverage $k$-partition also captures several interesting problems: e.g., (3) partition the vertices of a given graph into $k$ non-empty parts $V_1, V_2, \ldots, V_k$ in order to minimize $\sum_{i=1}^k f(V_i)$, where $f(P)$ is the number of edges incident to the vertex subset $P\subseteq V$. 
 To gain a better understanding of the difficulty in solving/approximating monotone submodular $k$-partition, we encourage the reader to briefly think about the concrete special case of matrix column partitioning problem 
 (i.e., problem (1) described above which is seemingly a linear algebra problem) before reading further. 

 Monotone submodular $k$-partition is NP-hard \cite{San21}. 
 Moreover, it admits a simple (and fast) $(2-1/k)$-approximation algorithm that will be denoted henceforth as the \emph{cheapest singleton partitioning algorithm}: return the partition $V_1:=\{v_1\}, V_2:=\{v_2\}, \ldots, V_{k-1}:=\{v_{k-1}\}, V_k:=V-\{v_1, \ldots, v_{k-1}\})$, where the $n$ elements of the ground set are ordered as $v_1, \ldots, v_{n}$ such that $f(\{v_1\})\le f(\{v_2\}) \le \ldots \leq f(\{v_n\})$. Santiago \cite{San21} showed that this is a $2$-approximation. 
 His analysis can be extended to show that it is in fact a $(2-1/k)$-approximation\footnote{The cheapest singleton partitioning algorithm returns a solution whose cost is $f(V-V_k)+\sum_{i=1}^{k-1} f(\{v_i\})\le f(V) + \sum_{i=1}^{k-1} f(\{v_i\})\le f(V) + (1-1/k)\sum_{i=1}^{k} f(\{v_i\})\le (2-1/k)\max\{f(V), \sum_{i=1}^k f(\{v_i\})\}$ while the cost of an optimum $k$-partition is at least $\max\{f(V), \sum_{i=1}^k f(\{v_i\})\}$. The lower bound on the cost of the optimum $k$-partition $V_1^*, \ldots, V_k^*$ is because $\sum_{i=1}^k f(V_i^*)\ge f(V)$ by non-negativity and submodularity and moreover, if the optimum partition is indexed such that $\min\{j\in [n]: v_j\in V_i^*\}\le \min\{j\in [n]: v_j\in V_{i+1}^*\}$ for all $i\in [k-1]$, then $f(V_i^*)\ge f(\{v_i\})$ by monotonicity and hence, $\sum_{i=1}^k f(V_i^*)\ge \sum_{i=1}^{k} f(\{v_i\})$.} and this is the best possible approximation factor for this algorithm\footnote{The best possible approximation factor for the cheapest singleton partitioning algorithm is $2-1/k$ as seen from this example: Let $f$ be the rank function of a partition matroid on a $k$-partition $\{S_1,\ldots, S_k\}$ of the ground set $S$ where $|S_i|\geq 2$ for all $i\in[k]$. Then, the algorithm may return $\{\{s_1\}, \{s_2\},\ldots, \{s_{k-1}\}, S-\{s_1, \ldots, s_{k-1}\}\}$, where $s_i\in S_i$ for all $i\in[k-1]$ and this $k$-partition has objective value $2k-1$, whereas the partition $\{S_1,\ldots, S_k\}$ has objective value $k$.}. 
 Alternatively, the greedy splitting algorithm presented in \cite{ZNI05} achieves a $(2-2/k)$-approximation. 
On the inapproximability front, Santiago \cite{San21} showed that there does not exist an algorithm that makes polynomial number of function evaluation queries to obtain a $(4/3-\epsilon)$-approximation for every constant $\epsilon>0$.

\vspace{2mm}
\noindent \textbf{Symmetric submodular $k$-partition.} 
Well-known special cases of symmetric submodular $k$-partition problem are graph $k$-cut and hypergraph $k$-partition---the 
submodular functions of interest here are the cut functions of an explicitly given graph and hypergraph respectively.
Graph $k$-cut is NP-complete \cite{GH94} and does not have a polynomial-time $(2-\epsilon)$-approximation for every constant $\epsilon>0$ under the Small Set Expansion Hypothesis \cite{Ma18}.
There are several known approaches to achieve a $2$-approximation for graph $k$-cut---(i) 
greedy splitting approach \cite{SV95}, (ii) Gomory-Hu tree based approach \cite{Va03}, (iii) extreme sets based approach \cite{NK07}, (iv) principal partition sequence based approach \cite{NRP96, Bar00, RS07}, and (v) covering-LP based approach \cite{NR01, CQX20, Q19}. Greedy splitting, Gomory-Hu tree, and extreme sets based approaches lead to a $(2-2/k)$-approximation while the principal partition sequence and the covering-LP based approaches lead to a $(2-2/n)$-approximation for graph $k$-cut.  
The principal partition sequence and the covering-LP based approaches for graph $k$-cut have also been shown to be related to each other \cite{CQX20}. 
The principal partition sequence based approach is the main algorithm of interest to our work and we will discuss it in detail in Section \ref{sec:pps}.

For the more general problem of symmetric submodular $k$-partition, two of the  approaches discussed in the previous paragraph for graph $k$-cut have been generalized to obtain $2$-approximations---the greedy splitting approach \cite{ZNI05} and the Gomory-Hu tree approach lead to a $(2-2/k)$-approximation. 
Analyzing the approximation factor of the principal partition sequence based approach for symmetric submodular $k$-partition was one of the driving motivations of our work. 
On the inapproximability front, Santiago \cite{San21} showed that there does not exist an algorithm that makes polynomial number of function evaluation queries to obtain a $(2-\epsilon)$-approximation for every constant $\epsilon >0$. 

\vspace{2mm}
\noindent \textbf{Posimodular submodular $k$-partition.} The only natural family of posimodular submodular functions that we are familiar with are symmetric submodular functions and monotone submodular functions as well as their positive linear combinations.
 As mentioned before, posimodular submodular $k$-partition is a unified generalization of symmetric submodular $k$-partition and  monotone submodular $k$-partition.  
 To the best of authors' knowledge, posimodular submodular $k$-partition has not been studied in the literature before and there are no specialized algorithms or approximation factor analysis of existing algorithms for posimodular submodular $k$-partition.  
 A slight modification to the analysis of the greedy splitting algorithm presented in \cite{ZNI05} shows that their algorithm achieves a $(3-2/k)$-approximation for posimodular submodular $k$-partition---we refrain from presenting this analysis in the interests of brevity. On the inapproximability front, since symmetric submodular functions are also posimodular submodular, the lower bound for symmetric submodular $k$-partition also holds for posimodular submodular $k$-partition, i.e., there does not exist an algorithm for posimodular submodular $k$-partition that makes polynomial number of function evaluation queries to obtain a $(2-\epsilon)$-approximation for every constant $\epsilon>0$.   
 


\subsection{Our Results}\label{sec:results}
In this work, we investigate Narayanan, Roy, and Patkar's \cite{NRP96} principal partition sequence based algorithm for submodular $k$-partition. They showed that their algorithm achieves a $2$-approximation for graph $k$-cut (which was subsequently rediscovered by Ravi and Sinha \cite{RS07}).
We show the following results: 
 \begin{enumerate}
 
 \item Their algorithm achieves a 
 $4/3$-approximation
 for monotone submodular $k$-partition. This result improves on the $2$-factor 
 that is known to be achievable via two different algorithms: the cheapest singleton partitioning algorithm and 
 the greedy splitting algorithm. Moreover, our upper bound of $4/3$ matches the lower bound shown by Santiago \cite{San21}. We will discuss the significance of our upper bound result shortly. 

 \item Their algorithm achieves a $2$-approximation 
 for symmetric submodular $k$-partition. This factor matches the $2$-factor that is known to be achievable via two other algorithms: the greedy splitting algorithm and the Gomory-Hu tree based algorithm, and also matches the lower bound \cite{Ma18, San21}. Our contribution here is generalizing the analysis of \cite{NRP96, RS07} to beyond graph cut functions. 
 
 \item Their algorithm achieves a $2$-approximation 
 for posimodular submodular $k$-partition. This result improves on the $3$-factor that is known to be achievable via the greedy splitting algorithm and matches the lower bound of $2$ shown by Santiago \cite{San21}.

 \end{enumerate}
 See Table \ref{tab: approximation factors} for a comparison. 
 Graph $k$-cut is the well-studied special case of symmetric submodular $k$-partition/posimodular submodular $k$-partition, so we include that as the last column in the table for comparison.  Approximation factors in the row corresponding to principal partition sequence are the main results of our work. In the last row of the table, we include the known lower bounds on the approximation factor for comparison. The lower bound for graph cut function is assuming the Small Set Expansion Hypothesis \cite{Ma18} while the rest of the lower bounds are assuming polynomial number of function evaluation queries. Dashes in the table indicate that either the approach does not extend or there has been no analysis of the approximation factor of the approach for the subfamily.
 
 We complement our upper bounds on the approximation factor of their algorithm with matching lower bound constructions for each subfamily of submodular functions. 
 Our results show that the principal partition sequence based algorithm achieves the best possible approximation factor for broad subfamilies of submodular functions, thus illustrating the power and applicability of this algorithm. 
 On the other hand, we show that the approximation factor of their algorithm for arbitrary submodular functions is $\Omega(n/k)$ via a lower bound construction. This construction shows that their principal partition sequence based algorithm cannot directly improve the approximation factor for submodular $k$-partition beyond the current best $O(k)$. 
We briefly discuss the significance of our $4/3$-approximation result for monotone submodular $k$-partition. Firstly, prior to our results, there were no known families of submodular functions for which the submodular $k$-partition problem could be approximated to a factor better than $2$. Our result for monotone submodular functions breaks this $2$-factor barrier for a broad family of submodular functions. Secondly, our result for monotone submodular $k$-partition leads to a new approximation result even for a graph partitioning problem that we describe now. 
For a graph $G=(V, E)$ with edge weights $w:E\rightarrow \R_+$, consider functions $d, f: 2^V\rightarrow \R_+$ defined by $d(S):=w(\delta(S))$ and $f(S):=w(E[S])+w(\delta(S))$ for every $S\subseteq V$, where $\delta(S)$ denotes the set of edges with exactly one end-vertex in $S$, $E[S]$ denotes the set of edges with both end-vertices in $S$, and $w(F):=\sum_{e\in F}w(e)$ for every $F\subseteq E$. The function $d$ is the cut function of the graph and is symmetric submodular. The function $f$ is the coverage function of the graph and is monotone submodular. Submodular $k$-partition for the function $d$ is known as graph $k$-cut and it is known that graph $k$-cut does not admit a $(2-\epsilon)$-approximation under the Small Set Expansion Hypothesis \cite{Ma18}. In contrast, our results show that coverage $k$-partition in graphs---i.e., submodular $k$-partition for the function $f$---admits a $4/3$-approximation. We note that coverage $k$-partition in graphs is NP-hard \cite{San21} and its approximability is an intriguing open question.

\begin{table}[h]
    \centering
    \begin{tabular}{|c|c|c|c|c|}
    \hline
    & Monotone & Symmetric & Posimodular & Graph  \\ 
    & Submodular & Submodular & Submodular & Cut Function \\ 
    & Function & Function & Function & \\ 
    \hline
    Greedy splitting & $2-2/k$ \cite{ZNI05} & $2-2/k$ \cite{ZNI05} &$3-2/k$ \cite{ZNI05}* & $2-2/k$ \cite{SV95} \\
    \hline
    Extreme Sets&---&---&---& $2-2/k$ \cite{NK07}\\
    \hline
    Gomory-Hu tree & --- & $2-2/k$ [Folklore] & --- & $2-2/k$ \cite{Va03} \\
    \hline
    Covering-LP &---&---&---&$2-2/k$ \cite{NR01, CQX20}\\
    \hline
    Cheapest Singleton &$2-1/k$ \cite{San21}&---&---&---\\
     Partitioning & & & & \\
     \hline
    Principal Partition & $4/3-4/(9n+3)$ & $2-2/n$ & $2-2/(n+1)$ & $2-2/n$ \cite{NRP96, RS07} \\
    Sequence & (Theorem \ref{Thm: 4/3-approx mono sm k-cut algo}) & (Theorem \ref{Thm: 2-approx sym sm k-cut algo}) & (Theorem \ref{Thm: 2-approx pm sm k-cut algo}) & \\
    \hline
    \textbf{Lower Bound} & $4/3-\epsilon$ \cite{San21} & $2-\epsilon$ \cite{San21}& $2-\epsilon$ \cite{San21} & \textbf{$2-\epsilon$} \cite{Ma18}\\
    \hline
    \end{tabular}
    \caption{Approximation factors of symmetric/monotone/posimodular submodular $k$-partition using different approaches. Result in the first row marked with an asterisk follows by slight modifications to the known analysis of the approximation factor for symmetric submodular functions given in \cite{ZNI05}.}
    \label{tab: approximation factors}
\end{table}

\vspace{2mm}
\noindent \textbf{Organization.} We discuss the principal partition sequence based algorithm in Section \ref{sec:pps}. We analyze the approximation factor of the algorithm with matching lower bound constructions for each of the three subfamilies of submodular functions in Section \ref{sec:approx-analysis}. We exhibit an instance of submodular $k$-partition where the algorithm achieves an approximation factor of $\Omega(n/k)$ in Section \ref{subsec: PPS n/k-apx}.

 \subsection{Related work} 
 The principal partition sequence based algorithm for submodular $k$-partition was introduced by Narayanan, Roy, and Patkar \cite{NRP96}. We will formally define the principal partition sequence of a submodular function and describe their algorithm in Section \ref{sec:pps}. They analyzed the approximation factor of their algorithm for two variants of $k$-partitioning problems in hypergraphs. 
 These two variants are not special cases of symmetric/monotone/posimodular submodular $k$-partition and are not of direct interest to our work. However, we describe these variants 
 to highlight 
 the versatility 
 of the principal partition sequence based approach and also to shed light on the results of Narayanan, Roy, and Patkar's work which do not seem to be well-known in the literature. 
Given a hypergraph $H=(V,E)$, a hyperedge cost function $c:E\to\R_+$, and an integer $k$, the goal is to find a partition
$\mcp:=\{V_1,V_2, \ldots,V_k\}$ of $V$ 
into $k$ non-empty parts 
that minimizes an objective of interest: 
\begin{enumerate}
    \item If the objective is the sum of cost of hyperedges that intersect at least two parts of $\mcp$, then the problem is known as \emph{hypergraph $k$-cut}. 
    \item If the objective is the sum of \emph{cost of hyperedges relative to the partition $\mcp$}, where the cost of a hyperedge $e$ relative to $\mcp$ is $c(e)(\ell-1)$ with $\ell$ being the number of parts of $\mcp$ intersected by $e$, then the problem is known as \emph{normalized coverage $k$-partition}\footnote{We introduce this nomenclature because the problem is equivalent to finding a partition $V_1, V_2, \ldots, V_k$ of the ground set $V$ in order to minimize $\sum_{i=1}^k f(V_i) - f(V)$, where $f:2^V\rightarrow \R_+$ is an explicitly given coverage function (every coverage function can be uniquely represented using a hypergraph \cite{CH12}). We consider the subtraction of $f(V)$ as normalizing the objective since it is a 
    trivial lower bound on the sum of the function values of the parts: $\sum_{i=1}^k f(V_i)\ge f(V)$ holds for every $k$-partition $V_1, V_2, \ldots, V_k$ since $f$ is a coverage function.}.  
\end{enumerate}
Narayanan, Roy, and Patkar \cite{NRP96} 
showed that their principal partition sequence based algorithm achieves 
a $r(1-1/n)$-approximation for hypergraph $k$-cut, where $r$ is the size of the largest hyperedge and $n$ is the number of vertices in the input hypergraph, and achieves a $(2-2/n)$-approximation for normalized coverage $k$-partition.
A consequence (of both of their results) is that the principal partition sequence based algorithm achieves a $(2-2/n)$-approximation for graph $k$-cut. 
Their principal partition sequence based algorithm for graph $k$-cut is equivalent to the Lagrangean relaxation approach suggested by Barahona \cite{Bar00}. The approximation factor of the principal partition sequence based algorithm for graph $k$-cut being at most $2$ was 
rediscovered by Ravi and Sinha \cite{RS07} and for hypergraph $k$-cut being at most $r$ was rediscovered by Ba\"{i}ou and Barahona \cite{BB23}. 

We mention that a slight modification to the analysis of the greedy splitting algorithm presented in \cite{ZNI05} shows that the greedy splitting  algorithm achieves a $(2-2/k)$-approximation for normalized coverage $k$-partition and hypergraph $k$-partition. We note that hypergraph $k$-partition is a special case of symmetric submodular $k$-partition and is different from hypergraph $k$-cut (for definition of hypergraph $k$-partition, see discussion of symmetric submodular $k$-partition at the beginning of the introduction). 
On the inapproximability front, it is known that hypergraph $k$-cut does not admit an approximation factor of $n^{1/(\log\log n)^c}$, where $c$ is a constant, assuming the Exponential Time Hypothesis \cite{CL20}. The best inapproximability result for the more general submodular $k$-partition problem (mentioned in the introduction) follows from this inapproximability for hypergraph $k$-cut.

\section{Principal partition sequence based algorithm}
\label{sec:pps}
In this section, we recall the principal partition sequence based algorithm for submodular $k$-partition designed by Narayanan, Roy, and Patkar \cite{NRP96}. We begin with some notation. 
Throughout this work, a \emph{partition} of a set $S$ is defined to be a collection of \emph{nonempty} pairwise disjoint subsets of $S$ whose union is $S$, and a \emph{$k$-partition} of a set $S$ is defined to be a partition of $S$ with exactly $k$ parts. For two distinct partitions $\mcp$ and $\mcq$ of a set $S$, if every part of $\mcq$ is completely contained in some part of $\mcp$ then we say that $\mcq$ \emph{refines} $\mcp$ (equivalently, $\mcp$ is a coarsening of $\mcq$). For two distinct partitions $\mcp$ and $\mcq$ of a set $S$, we will say that $\mcq$ is obtained from $\mcp$ by refining only one part of $\mcp$ if there exists a part $P\in \mcp$ such that $P\notin \mcq$ and every part $Q\in \mcq$ satisfies either $Q\subsetneq P$ or $Q\in \mcp$ (i.e., either $Q$ is a proper subset of the part $P$ or $Q$ is a part of the partition $\mcp$); we will denote such a part $P\in \mcp$  as the part refined by $\mcq$. 

Let $f:2^V\to\R$ be a set function on ground set $V$. For a collection $\mcp$ of subsets of $V$, we write $f(\mcp):=\sum_{P\in \mcp}f(P)$. We will say that a partition $\mcp =\{P_1, \ldots, P_k\}$ is an optimal $k$-partition if $f(\mcp)\le f(\mcq)$ for every $k$-partition $\mcq$ of $V$. 
We define the function $g_{f,\mcp}:\R_{\geq 0}\to\R$ for a partition $\mcp$ of the ground set $V$ and the function $g_f:\R_{\geq 0}\to\R$ as follows:
\begin{align*}
    g_{f,\mcp}(b)&:=f(\mcp)-b|\mcp| \text{ and}\\
    g_f(b)&:=\min\{g_{f,\mcp}(b):\mcp\text{ is a partition of }V\}.
\end{align*}
We drop the subscript $f$ and instead write $g_{\mcp}$ and $g$ respectively, if the function $f$ is clear from context. 
By definition, the function $g_f$ is piece-wise linear. It can be shown that $g_f$ has at most $|V|-1$ breakpoints. 
The next theorem shows that if the function $f:2^V\rightarrow \R$ is submodular, then 
there exists a sequence of partitions achieving the $g_f$ function values at the breakpoints that have a nice structure; moreover, the breakpoints and such a sequence of partitions can be computed in polynomial time given access to the evaluation oracle of the submodular function $f$. 
We emphasize that the theorem holds for arbitrary submodular functions (which may not be non-negative valued). 

\begin{theorem}[\hspace{-1sp}\cite{Na91,NRP96}] \label{Thm: PPS}
Let $f:2^V\to\R$ be a submodular function on a ground set $V$. 
Then, there exists a sequence $\mcp_1, \mcp_2, \ldots, \mcp_{r}$ of partitions of $V$ and values $b_1, b_2, \ldots, b_{r-1}$ such that
\begin{enumerate}
    \item $\mcp_1=\{V\}$ and $\mcp_r=\{\{v\}:v\in V\}$,
    \item For each $j\in[r-1]$, the partition $\mcp_{j+1}$ is obtained from $\mcp_j$ by refining only one part of $\mcp_j$, 
    \item $b_1< b_2< \ldots < b_{r-1}$, 
    \item $g(b_j)=g_{\mcp_j}(b_j)=g_{\mcp_{j+1}}(b_j)$ for each $j\in[r-1]$ and  
    \item 
    $g(b)=g_{\mcp_1}(b)$ for all $b\in (-\infty, b_1]$, \\
    $g(b) = g_{\mcp_{j+1}}(b)$ for all $b\in [b_j, b_{j+1}]$ for each $j\in [r-2]$, and \\
    $g(b)=g_{\mcp_r}(b)$ for all $b\in [b_{r-1}, \infty)$. 
\end{enumerate}
Moreover, such a sequence $\mcp_1, \mcp_2, \ldots, \mcp_{r}$ of partitions of $V$ and values $b_1, b_2, \ldots, b_{r-1}$ can be computed in polynomial time given access to the evaluation oracle of the submodular function $f$. 

\end{theorem}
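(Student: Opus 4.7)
The plan is to analyze $g_f$ via its concave piecewise linear structure and then use the Dilworth truncation of $f$ together with submodularity to extract the principal partition sequence. As a first step, since each $g_{f,\mcp}(\cdot)$ is affine in $b$ with integer slope $-|\mcp|\in\{-1,\ldots,-n\}$, the function $g_f$ is concave and piecewise linear; its lower envelope consists of at most $n$ distinct affine pieces (one per distinct slope), giving at most $n-1$ breakpoints $b_1<\cdots<b_{r-1}$ with $r\le n$. The extreme affine pieces, of slopes $-1$ and $-n$, are attained only by $\{V\}$ and $\{\{v\}:v\in V\}$, respectively, pinning down $\mcp_1=\{V\}$ and $\mcp_r=\{\{v\}:v\in V\}$.

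To produce the intermediate partitions, I would appeal to the Dilworth truncation: for each fixed $b\in\R$, the set function $\hat f_b(S):=\min\{f(\mcp)-b|\mcp|:\mcp\text{ is a partition of }S\}$ is submodular on $2^V$, and $g_f(b)=\hat f_b(V)$. The coarsest minimizing partition of $V$ at parameter $b$ can then be identified as the decomposition of $V$ into maximal ``tight'' subsets $T$ (those with $\hat f_b(T)=f(T)-b$), while the finest minimizing partition is the analogous decomposition into minimal tight subsets; submodularity of $\hat f_b$ guarantees that both decompositions are well-defined partitions and that the finest refines the coarsest. For each breakpoint $b_j$, I would set $\mcp_j$ to be the coarsest minimizer at $b_j$ and $\mcp_{j+1}$ to be the finest minimizer at $b_j$, which yields conditions (4) and (5) together with the refinement relation that forms the inclusion half of condition (2).

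The step I expect to be the main obstacle is showing that $\mcp_{j+1}$ is obtained from $\mcp_j$ by refining only a \emph{single} part. I would argue by contradiction using the lattice structure of $\mathcal{M}(b_j):=\{\mcp:g_{f,\mcp}(b_j)=g_f(b_j)\}$: if two distinct parts of $\mcp_j$ were simultaneously refined in $\mcp_{j+1}$, then one can exhibit an intermediate partition in $\mathcal{M}(b_j)$ that refines only one of these parts while leaving the other merged, and this intermediate would produce an affine piece of $g_f$ with slope strictly between $-|\mcp_j|$ and $-|\mcp_{j+1}|$ passing through the envelope at $b_j$, contradicting that $\mcp_{j+1}$ is optimal on the entire interval $[b_j,b_{j+1}]$. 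Polynomial-time computability follows from parametric submodular function minimization: for any fixed $b$, evaluating $g_f(b)$ and producing a minimizer reduces to polynomially many applications of submodular minimization on $f(\cdot)-b$, and a parametric sweep in $b$ starting from $-\infty$ locates each breakpoint and its corresponding partitions.
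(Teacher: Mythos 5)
The paper does not actually prove this theorem; it cites it from \cite{Na91,NRP96}, so I will judge your outline on its own merits. The overall machinery you invoke --- the concave piecewise-linear structure of $g_f$, the Dilworth truncation $\hat f_b$, the lattice of minimizing partitions, and a parametric sweep --- is the standard and correct route, and your treatment of conditions 1, 4, 5 and of computability is reasonable. The gap is in your argument for condition 2, which you correctly flag as the hard step. You claim that an intermediate partition $\mcq\in\mathcal{M}(b_j)$ with $|\mcp_j|<|\mcq|<|\mcp_{j+1}|$ would ``contradict that $\mcp_{j+1}$ is optimal on the entire interval $[b_j,b_{j+1}]$.'' It does not: the line $g_{f,\mcq}$, having slope strictly between $-|\mcp_j|$ and $-|\mcp_{j+1}|$, can touch the concave lower envelope at the single point $b_j$ while lying strictly above it on both sides, which is fully compatible with $g_{f,\mcp_j}$ and $g_{f,\mcp_{j+1}}$ being the two envelope facets meeting at $b_j$. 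For a concrete witness, take $f$ to be the cut function of two vertex-disjoint copies of $K_4$ on vertex sets $U$ and $W$. At $b=4$ the partitions $\{U,W\}$, the partition that shatters $U$ into singletons but leaves $W$ whole, and the all-singleton partition are all minimizers; the envelope transitions directly from slope $-2$ to slope $-8$, and the intermediate slope $-5$ line grazes the envelope only at $b=4$.

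The intermediate $\mcq$ is useful, but constructively rather than as a reductio. Pick one refined part $S\in\mcp_j$, set $\mcp^1:=\{T\in\mcp_{j+1}:T\subseteq S\}$, and form $\mcq^1:=(\mcp_j\setminus\{S\})\cup\mcp^1$ and $\mcq^2:=(\mcp_{j+1}\setminus\mcp^1)\cup\{S\}$. One checks $g_{\mcq^1}(b_j)+g_{\mcq^2}(b_j)=g_{\mcp_j}(b_j)+g_{\mcp_{j+1}}(b_j)=2g(b_j)$; since each left-hand summand is $\geq g(b_j)$, both $\mcq^1$ and $\mcq^2$ are minimizers at $b_j$, and $\mcq^1$, which refines exactly one part of $\mcp_j$, may be inserted between $\mcp_j$ and $\mcp_{j+1}$. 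Iterating yields condition 2; this insertion is precisely what the commented-out refinement lemma in the paper's source carries out. Two cautions follow. First, when several parts are refined at the same breakpoint the inserted partitions reuse the same critical value $b_j$, so once condition 2 is enforced the breakpoints can only be taken non-decreasing rather than strictly increasing; the disjoint-$K_4$ example already exhibits this, and the paper's own commented-out lemma accordingly uses a non-strict formulation. Second, your particular choice (coarsest minimizer at $b_j$ as $\mcp_j$, finest as $\mcp_{j+1}$) gives a sequence indexed by distinct envelope slopes, hence strictly increasing breakpoints, but possibly multi-part refinements, so it satisfies condition 3 at the cost of condition 2; after the insertions you recover condition 2 but repeat some $b_j$ values.
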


For a submodular function $f:2^V\rightarrow \R$, we will denote a sequence of partitions $\mcp_1, \mcp_2, \ldots,\mcp_r$ and the sequence of values $b_1,b_2, \ldots, b_{r-1}$ satisfying the conditions given in Theorem \ref{Thm: PPS} as a \emph{principal partition sequence} and the \emph{critical value sequence} of $f$, respectively. 
We note that this definition differs from those in \cite{Na91,NRP96} owing to the reversed indexing order and the imposition of condition 2---we note that the proofs given in those papers also show that condition 2 holds (also see \cite{RS07}). 
The principal partition sequence of submodular functions is known in the literature as \emph{principal lattice of partitions} of submodular functions since there exists a lattice structure associated with the sequence of partitions. We choose to call it as principal partition sequence in this work since the sequence suffices for our purpose. 
For more on principal lattice of partitions of submodular functions and their computation, we refer the reader to \cite{Cun85, Na91, NRP96, Narayanan-book, DNP03, PN03, Bar00, Kol10, NKI10}. 

We now discuss the principal partition sequence based algorithm for submodular $k$-partition that was proposed by Narayanan, Roy, and Patkar \cite{NRP96}. 
This algorithm computes a principal partition sequence 
satisfying all conditions in Theorem \ref{Thm: PPS}. 
If the sequence contains a partition that has exactly $k$ parts, then the algorithm returns this $k$-partition. Otherwise, the algorithm returns a $k$-partition obtained by refining the partition in the sequence that has the largest number of parts that is less than $k$. 
The refinement is based on the partition in the sequence that has the fewest number of parts that is more than $k$. 
The formal description of the refinement is given in Algorithm 1. Since the sequence $\mcp_1,\mcp_2,\ldots,\mcp_r$ satisfying the conditions of Theorem \ref{Thm: PPS} can be computed in polynomial time, Algorithm 1  
can indeed be implemented to run in polynomial time. By design, the algorithm returns a $k$-partition. The remainder of this work will focus on analyzing the approximation factor of the algorithm. 
\begin{algorithm}[ht] \label{Algo: 2-approx sym sm k-partition}
\begin{algorithmic}
\STATE \textbf{Input: }A submodular function $f:2^V\to\R$ given by evaluation oracles and an integer $k\geq 2$.
\STATE \textbf{Output: }A $k$-partition $\mcp$ of $V$.
\STATE Use Theorem \ref{Thm: PPS} to compute a principal partition sequence $\mcp_1,\mcp_2,\ldots,\mcp_{r}$ of the submodular function $f$ satisfying all conditions stated in that theorem.
\IF{ $|\mcp_j|=k$ for some $j\in[r]$}
\STATE Return $\mcp:=\mcp_j$.
\ENDIF
\STATE Let $i\in\{2,3,\ldots,r\}$ so that $|\mcp_{i-1}|<k<|\mcp_i|$.
\STATE Let $S\in \mcp_{i-1}$ be the part refined by $\mcp_i$ and $\mcp'$ be the parts of $\mcp_i$ contained in $S$.
\STATE Let $\mcp'=\{B_1,\ldots,B_{|\mcp'|}\}$ such that $f(B_1)\leq f(B_2)\leq\ldots\leq f(B_{|\mcp'|})$.
\STATE Return $\mcp:=(\mcp_{i-1}\backslash\{S\})\cup\left\{B_1,B_2,\ldots,B_{k-|\mcp_{i-1}|}\right\}\cup\left\{\bigcup_{j=k-|\mcp_{i-1}|+1}^{|\mcp'|}B_j\right\}$.
\end{algorithmic}
\caption{Principal partition sequence based algorithm for submodular $k$-partition}
\end{algorithm}

To construct examples that exhibit tight lower bound on the approximation factor of Algorithm 1, we will need the following proposition that identifies a special case under which the principal partition is unique and consists only of two partitions---namely, the partition into singletons and the partition that consists of only one part.  
\begin{proposition}\label{prop:unique-principal-partition}
Let $f: 2^V\rightarrow \R_{\ge 0}$ be a non-negative submodular function. Suppose that for every partition $\mcp \neq \mcq, \{V\}$ where $\mcq:=\{\{v\}:v\in V\}$, the function $f$ satisfies 
\[
\frac{f(\mcp)-f(V)}{|\mcp|-1}>\frac{f(\mcq)-f(V)}{|V|-1}.
\]
Then, the principal partition sequence of $f$ is $\{V\}, \mcq$. 
\end{proposition}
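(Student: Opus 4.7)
The strategy is to interpret the function $g_f$ geometrically as the lower envelope of the family of lines $\{b\mapsto g_{f,\mcp}(b)=f(\mcp)-b|\mcp| : \mcp \text{ is a partition of } V\}$. By Theorem \ref{Thm: PPS}, each partition in the principal partition sequence contributes a segment to this envelope and each critical value is the $b$-coordinate of a breakpoint. So the task reduces to showing that exactly two lines---those corresponding to $\{V\}$ and $\mcq$---contribute to the envelope, meeting at a single breakpoint. First I would set $b^*:=(f(\mcq)-f(V))/(|V|-1)$, the unique $b$ at which the two lines $g_{f,\{V\}}(b)=f(V)-b$ and $g_{f,\mcq}(b)=f(\mcq)-b|V|$ intersect.

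The hypothesis can then be rewritten as the statement that $g_{f,\mcp}(b^*)>g_{f,\{V\}}(b^*)=g_{f,\mcq}(b^*)$ for every partition $\mcp\neq\{V\},\mcq$; indeed,
\[
g_{f,\mcp}(b^*)-g_{f,\{V\}}(b^*)=(f(\mcp)-f(V))-b^*(|\mcp|-1)=(|\mcp|-1)\left(\frac{f(\mcp)-f(V)}{|\mcp|-1}-b^*\right),
\]
which is strictly positive precisely by assumption. Hence at the candidate breakpoint $b^*$, the two lines corresponding to $\{V\}$ and $\mcq$ are the only minimizers across the whole family.

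Propagating this pointwise inequality to all $b\in\R$ is a routine slope-monotonicity argument. For any $\mcp\neq\{V\}$, the difference $g_{f,\mcp}-g_{f,\{V\}}$ is a linear function of $b$ with slope $1-|\mcp|\leq -1$ and takes value $\geq 0$ at $b^*$ (with equality precisely when $\mcp=\mcq$), hence is strictly positive for all $b<b^*$. Symmetrically, for any $\mcp\neq\mcq$ the difference $g_{f,\mcp}-g_{f,\mcq}$ is linear in $b$ with slope $|V|-|\mcp|\geq 1$ (using that $|\mcp|\leq|V|-1$ for any partition other than $\mcq$, by a pigeonhole argument on the parts) and takes value $\geq 0$ at $b^*$, hence is strictly positive for all $b>b^*$.

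Combining the three observations gives $g_f(b)=g_{f,\{V\}}(b)$ on $(-\infty,b^*]$ and $g_f(b)=g_{f,\mcq}(b)$ on $[b^*,\infty)$, with $b^*$ as the unique breakpoint and with no partition other than $\{V\}$ and $\mcq$ ever attaining the minimum. The characterization in Theorem \ref{Thm: PPS} then forces the principal partition sequence to consist of exactly the two partitions $\mcp_1=\{V\}$ and $\mcp_2=\mcq$ with critical value $b_1=b^*$. The only delicate point is that the hypothesis does not directly cover the two excluded partitions $\{V\}$ and $\mcq$ themselves; these must be handled separately in the slope-comparison step, but in each case the relevant difference vanishes at $b^*$ by construction, so the sign of the slope alone finishes the argument.
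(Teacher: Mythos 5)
Your proof is correct and follows essentially the same route as the paper's: both reduce the hypothesis to strict domination of $g_{f,\{V\}}$ and $g_{f,\mcq}$ at the crossover value $b^*=(f(\mcq)-f(V))/(|V|-1)$ and then use linearity in $b$ together with $2\le|\mcp|\le|V|-1$ to rule out every other partition at every $b$. The only difference is presentational: you argue directly via the signs of the slopes on either side of $b^*$, whereas the paper derives the same conclusion by contradiction from the two inequalities $(b-b')|\mcp|>b-b'$ and $(b-b')|\mcp|>(b-b')n$.
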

\begin{proof}
Let $n:=|V|$. It suffices to show that for every partition $\mcp$ of $V$ such that $\mcp\neq\{V\},\mcq$, we have that
\begin{align}
    f(\mcp)-b|\mcp|>\min\{f(V)-b, f(\mcq)-b\cdot n\}\quad\forall b\in\R \label{eqn: 4/3 tight example - 1}
\end{align}
This suffices since it ensures that no partitions other than $\{V\}$ and $\mcq$ satisfy conditions 4 and 5 in Theorem \ref{Thm: PPS}.
By the hypothesis, 
we have 
\[\frac{f(\mcp)-f(V)}{|\mcp|-1}> \frac{f(\mcq)-f(V)}{n-1},\]
which is equivalent to 
\begin{align}
    f(\mcp)-b'|\mcp|>f(V)-b',\label{eqn: 4/3 tight example - 2}
\end{align}
where $b'=(f(\mcq)-f(V))/(n-1)$ for every partition $\mcp\neq\{V\},\mcq$. Now, suppose inequality \eqref{eqn: 4/3 tight example - 1} fails for some partition $\mcp$ and some $b\in \R$, then
\begin{align}
    &f(\mcp)-b|\mcp|\leq f(V)-b \text{ and } \label{eqn: 4/3 tight example - 3}
    \\& f(\mcp)-b|\mcp|\leq f(\mcq)-bn. \label{eqn: 4/3 tight example - 4}
\end{align}
Consequently, we have that
\begin{align*}
    (b-b')|\mcp| 
    &= (f(\mcp)-b'|\mcp|)-(f(\mcp)-b|\mcp|) \\
    &> (f(V)-b') -(f(\mcp)-b|\mcp|) \quad \quad \text{(using inequality \eqref{eqn: 4/3 tight example - 2})}\\
    &\ge (f(V)-b') - (f(V)-b) \quad \quad \text{(using inequality \eqref{eqn: 4/3 tight example - 3})}\\
    &= b-b'. 
\end{align*}
This implies $b-b'>0$. Moreover, we also have that 
\begin{align*}
    (b-b')|\mcp| 
    &=(f(\mcp)-b'|\mcp|)-(f(\mcp)-b|\mcp|) \\
    &>(f(V)-b') -(f(\mcp)-b|\mcp|) \quad \quad \text{(using inequality \eqref{eqn: 4/3 tight example - 2})}\\
    &\ge (f(V)-b') - (f(\mcq)-bn) \quad \quad \text{(using inequality \eqref{eqn: 4/3 tight example - 4})}\\
    & = (f(\mcq)-b'n)- (f(\mcq)-bn)\quad\quad(\text{by definition of }b')\\
    & = (b-b')n.
\end{align*}
Since $b-b'> 0$, it follows that $|\mcp|>n$, which is a contradiction to the fact that $\mcp$ is a partition of $V$. Therefore, inequality \eqref{eqn: 4/3 tight example - 1} holds by contradiction.
\end{proof}
\section{Approximation factor analysis}\label{sec:approx-analysis}
In this section, we analyze the approximation factor of Algorithm 1 for various subfamilies of submodular functions. We state and prove certain lemmas that will be useful for all submodular functions (Lemmas \ref{claim:exact-k} and \ref{lemma: f long ineq}). 
We analyze the approximation factor of Algorithm 1 for monotone submodular functions in Section \ref{sec:mono}, for symmetric submodular functions in Section \ref{sec:sym}, and for posimodular submodular functions in Section \ref{sec:pm}. We conclude each subsection with a remark on the tightness of the approximation factor for the algorithm. 

Our first lemma identifies a special case in which Algorithm 1 returns an optimum $k$-partition. This special case was also identified by Narayanan, Roy, and Patkar. We note that the following lemma holds for arbitrary submodular functions (which may not be non-negative). 

\begin{lemma}[\hspace{-1sp}\cite{NRP96}]\label{claim:exact-k}
Let $k\geq 2$ be an integer, $f:2^V\to\R$ be a submodular function on a ground set $V$, and $\mcp_1, \mcp_2, \ldots, \mcp_r$ be a principal partition sequence of the submodular function $f$ satisfying the conditions of Theorem \ref{Thm: PPS}. 
If there exists $j\in [r]$ such that $|\mcp_j|=k$, then $\mcp_j$ is an optimal $k$-partition.
\end{lemma}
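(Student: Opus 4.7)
The plan is to exploit the definition of $g_f$ together with conditions 4 and 5 of Theorem \ref{Thm: PPS} to find a value of $b$ at which $\mcp_j$ attains the minimum that defines $g_f(b)$. Once such a $b$ is in hand, the inequality $g_f(b) \le g_{f,\mcq}(b)$ for an arbitrary $k$-partition $\mcq$ becomes an inequality between $f(\mcp_j) - bk$ and $f(\mcq) - bk$ (since both $\mcp_j$ and $\mcq$ have exactly $k$ parts), and the $bk$ terms cancel to yield $f(\mcp_j) \le f(\mcq)$.

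More concretely, I would first record that the hypothesis $k \ge 2$ together with $|\mcp_1| = 1$ forces $j \ne 1$. Then I would split into two cases based on the location of $j$ in the sequence. If $j = r$, condition 5 gives $g_f(b) = g_{f,\mcp_r}(b)$ for all $b \in [b_{r-1},\infty)$, so I may pick $b := b_{r-1}$. If $2 \le j \le r-1$, condition 5 gives $g_f(b) = g_{f,\mcp_j}(b)$ for all $b \in [b_{j-1}, b_j]$, so I may again pick $b := b_{j-1}$ (or equivalently use condition 4 directly at one of the breakpoints adjacent to $\mcp_j$). In either case I obtain a value of $b$ with $g_f(b) = g_{f,\mcp_j}(b) = f(\mcp_j) - bk$.

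Now let $\mcq$ be any $k$-partition of $V$. By the definition of $g_f$ as a minimum over all partitions, we have
\[
f(\mcp_j) - bk \;=\; g_f(b) \;\le\; g_{f,\mcq}(b) \;=\; f(\mcq) - b|\mcq| \;=\; f(\mcq) - bk,
\]
and subtracting $-bk$ from both sides yields $f(\mcp_j) \le f(\mcq)$. Since $\mcq$ was an arbitrary $k$-partition, $\mcp_j$ is optimal.

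I do not anticipate a serious obstacle: the only thing to verify carefully is the existence of a suitable $b$, and this is immediate from conditions 4 and 5 of Theorem \ref{Thm: PPS} together with the fact that the cases $j = 1$ and $k = 1$ are ruled out by the standing assumption $k \ge 2$. The rest of the argument is a one-line cancellation that crucially uses $|\mcp_j| = |\mcq| = k$, which is exactly why the argument does not extend to values of $k$ not appearing as a part-count in the principal partition sequence and motivates the more delicate analysis performed later in the paper.
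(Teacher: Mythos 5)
Your proposal is correct and follows essentially the same argument as the paper: pick a value $b$ for which $g_f(b) = g_{f,\mcp_j}(b)$ (the paper simply uses $b_j$ from condition~4 without the explicit case split you spell out), then compare against any $k$-partition $\mcq$ and cancel the $bk$ terms. The extra care you take to rule out $j=1$ and to distinguish $j=r$ from $j<r$ is harmless and slightly more thorough than the paper's one-line treatment.
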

\begin{proof}
Let $\mcp^\ast$ be a $k$-partition of $V$ that minimizes $f(\mcp^\ast)$ and let $b_j$ be the value where $g(b_j)=g_{\mcp_j}(b_j)$. Then, 
\[f(\mcp^\ast)-b_j\cdot k=g_{\mcp^{\ast}}(b_j)\geq g(b_j) = g_{\mcp_j}(b_j)=f(\mcp_j)-b_j|\mcp_j|=f(\mcp_j)-b_j\cdot k,\]
and hence, $f(\mcp^\ast)\geq f(\mcp_j)$. 
Therefore, $\mcp_j$ is indeed an optimal $k$-partition.
\end{proof}

In order to address the case where there is no $j\in [r]$ such that $|\mcp_j|=k$, we need the following lemma that shows two lower bounds on the optimum value. The first lower bound in the lemma holds for arbitrary submodular functions (which may not be non-negative) while the second lower bound holds for non-negative submodular functions. 
\begin{lemma} \label{lemma: f long ineq} 
Let $k\geq 2$ be an integer, $f:2^V\to\R$ be a submodular function on a ground set $V$, $\mcp^\ast$ be a $k$-partition of $V$ that minimizes $f(\mcp^\ast)$, and $\mcp_1, \mcp_2, \ldots, \mcp_r$ be a principal partition sequence of the submodular function $f$ satisfying the conditions of Theorem \ref{Thm: PPS}. 
Suppose $|\mcp_j|\neq k$ for all $j\in[r]$.
Let $\mcp_{i-1},\mcp_i$ be the partitions such that $|\mcp_{i-1}|<k<|\mcp_i|$. Then, 
\begin{enumerate}[(i)]
    \item 
$f(\mcp^\ast) 
\ge 
\frac{|\mcp_i|-k}{|\mcp_i|-|\mcp_{i-1}|}f(\mcp_{i-1})+\frac{k-|\mcp_{i-1}|}{|\mcp_i|-|\mcp_{i-1}|}f(\mcp_i)$ and
\item $f(\mcp^\ast) \ge f(\mcp_{i-1})$ if $f$ is non-negative. 
\end{enumerate}
\end{lemma}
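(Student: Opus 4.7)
The plan is to pivot on the breakpoint $b_{i-1}$, at which both $\mcp_{i-1}$ and $\mcp_i$ attain the minimum in the parametric problem. By conditions 4 and 5 of Theorem \ref{Thm: PPS}, $g_{\mcp_{i-1}}(b_{i-1}) = g_{\mcp_i}(b_{i-1}) = g(b_{i-1})$. Equating the two expressions $f(\mcp_{i-1}) - b_{i-1}|\mcp_{i-1}|$ and $f(\mcp_i) - b_{i-1}|\mcp_i|$ and solving for $b_{i-1}$ yields
\[
b_{i-1} = \frac{f(\mcp_i) - f(\mcp_{i-1})}{|\mcp_i| - |\mcp_{i-1}|}.
\]

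For part (i), the key step is to compare with $\mcp^\ast$. Since $\mcp^\ast$ is a $k$-partition, $g_{\mcp^\ast}(b_{i-1}) = f(\mcp^\ast) - b_{i-1} k$, and by the definition of $g$ as the lower envelope we have $g_{\mcp^\ast}(b_{i-1}) \ge g(b_{i-1}) = f(\mcp_{i-1}) - b_{i-1}|\mcp_{i-1}|$. Rearranging gives $f(\mcp^\ast) \ge f(\mcp_{i-1}) + b_{i-1}(k - |\mcp_{i-1}|)$, and substituting the expression for $b_{i-1}$ above and grouping terms recovers exactly the claimed convex combination.

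For part (ii), I would observe that the two coefficients in the bound from (i) are non-negative (since $|\mcp_{i-1}| < k < |\mcp_i|$) and sum to $1$, so the bound from (i) is a convex combination of $f(\mcp_{i-1})$ and $f(\mcp_i)$. It therefore suffices to prove the refinement inequality $f(\mcp_{i-1}) \le f(\mcp_i)$. By condition 2 of Theorem \ref{Thm: PPS}, $\mcp_i$ is obtained from $\mcp_{i-1}$ by splitting a single part $P \in \mcp_{i-1}$ into $t \ge 2$ subparts $A_1, \ldots, A_t$, so it remains to show $\sum_{j=1}^{t} f(A_j) \ge f(P)$. I would prove this by induction on $t$: submodularity gives $f(A_1) + f(A_2 \cup \cdots \cup A_t) \ge f(\emptyset) + f(P) \ge f(P)$, using $f(\emptyset) \ge 0$, which holds because $f$ is non-negative; the induction hypothesis applied to $A_2, \ldots, A_t$ partitioning $A_2 \cup \cdots \cup A_t$ yields $\sum_{j=2}^t f(A_j) \ge f(A_2 \cup \cdots \cup A_t)$; adding the two inequalities closes the induction.

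The main concern I anticipate is purely bookkeeping: keeping signs straight when substituting the expression for $b_{i-1}$ in part (i), and isolating the precise place where non-negativity is used in part (ii). Notably, part (i) should go through for arbitrary (possibly negative-valued) submodular $f$ since no sign information about $b_{i-1}$ is needed, and non-negativity enters part (ii) only to guarantee $f(\emptyset) \ge 0$ in the refinement monotonicity step.
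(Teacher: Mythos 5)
Your proof of part (i) is essentially identical to the paper's: both pivot on the breakpoint $b_{i-1}$, use $g_{\mcp^\ast}(b_{i-1}) \ge g(b_{i-1}) = g_{\mcp_{i-1}}(b_{i-1}) = g_{\mcp_i}(b_{i-1})$, and substitute the explicit formula for $b_{i-1}$. (You lower-bound via $g_{\mcp_{i-1}}$ whereas the paper writes $g_{\mcp_i}$, but these are equal at the breakpoint, so the algebra is the same.)

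For part (ii) you take a genuinely different route. The paper coarsens $\mcp^\ast$ into a $k'$-partition $\mcq$ (where $k' = |\mcp_{i-1}|$) by merging the last $k - k' + 1$ parts of $\mcp^\ast$; non-negativity plus submodularity gives $f(\mcp^\ast) \ge f(\mcq)$, and Lemma \ref{claim:exact-k} (optimality of $\mcp_{i-1}$ among $k'$-partitions) gives $f(\mcq) \ge f(\mcp_{i-1})$. You instead show $f(\mcp_{i-1}) \le f(\mcp_i)$ directly from the refinement structure (condition 2 of Theorem \ref{Thm: PPS}) and then read off $f(\mcp^\ast) \ge f(\mcp_{i-1})$ as a consequence of part (i), since the right-hand side of (i) is a convex combination of $f(\mcp_{i-1})$ and $f(\mcp_i)$. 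Both routes are correct and invoke non-negativity for the same elementary fact (a refinement of a set into disjoint parts has total $f$-value at least that of the original set, because $f(A)+f(B) \ge f(\emptyset)+f(A\cup B) \ge f(A\cup B)$ for disjoint $A,B$). Your version is a bit more self-contained in that it does not re-invoke Lemma \ref{claim:exact-k}, but it does make part (ii) depend on part (i) and on condition 2, whereas the paper's argument for (ii) is independent of both. The anticipated concerns you flagged (sign bookkeeping in (i), locating where non-negativity enters in (ii)) are correctly resolved.
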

\begin{proof}
We prove the two lower bounds below. 
\begin{enumerate}[(i)]
    \item 
Let $b_{i-1}$ be the value such that $g_{\mcp_{i-1}}(b_{i-1})=g_{\mcp_{i}}(b_{i-1})$. Then, we have
\begin{align}
    f(\mcp_{i-1})-b_{i-1}|\mcp_{i-1}|=f(\mcp_{i})-b_{i-1}|\mcp_{i}|\implies b_{i-1}=\frac{f(\mcp_i)-f(\mcp_{i-1})}{|\mcp_i|-|\mcp_{i-1}|}.\label{Eqn: b0}
\end{align}
By condition 4 of Theorem \ref{Thm: PPS}, we also have
\begin{align}
    f(\mcp^\ast)-b_{i-1}\cdot k&=g_{\mcp^\ast}(b_{i-1})\geq g(b_{i-1})=g_{\mcp_i}(b_{i-1})= f(\mcp_i)-b_{i-1}|\mcp_i|\notag
    \\&\implies f(\mcp^\ast)\geq f(\mcp_i)+b_{i-1}(k-|\mcp_i|).\label{Eqn: f(P*)}
\end{align}
Combining \eqref{Eqn: b0} and \eqref{Eqn: f(P*)}, we get
\begin{align*}
    f(\mcp^\ast)&\geq f(\mcp_i)+\frac{f(\mcp_i)-f(\mcp_{i-1})}{|P_i|-|P_{i-1}|}(k-|\mcp_i|)
    \\&=\frac{|\mcp_i|-k}{|\mcp_i|-|\mcp_{i-1}|}f(\mcp_{i-1})+\frac{k-|\mcp_{i-1}|}{|\mcp_i|-|\mcp_{i-1}|}f(\mcp_i).
\end{align*}
\item Let $P_1^*, \ldots, P_k^*$ be the parts of $\mcp^{\ast}$ (in arbitrary order). Let $k':=|\mcp_{i-1}|$. We know that $k'<k$. Consider the $k'$-partition $\mcq$ obtained as $Q_1:=P_1^*, Q_2:=P_2^*, \ldots, Q_{k'-1}:=P_{k'-1}^*, Q_{k'}:=\cup_{j=k'}^{k}P_j^*$. Then, 
\[
f(\mcp^{\ast})
= \sum_{i=1}^{k} f(P_i^*)
\ge \left(\sum_{i=1}^{k'-1} f(P_i^*)\right)+f(\cup_{j=k'}^{k}P_j^*)
= \sum_{i=1}^{k'}f(Q_i) 
= f(\mcq). 
\] 
The inequality above is due to submodularity and non-negativity. 
The partition $\mcq$ is a $k'$-partition. By Lemma \ref{claim:exact-k}, the partition $\mcp_{i-1}$ is an optimal $k'$-partition. Hence, 
\[
f(\mcq) \ge f(\mcp_{i-1}).
\]
The above two inequalities together imply that $f(\mcp^\ast) \ge f(\mcp_{i-1})$. 
\end{enumerate}
\end{proof}

\subsection{Monotone submodular functions}\label{sec:mono}
In this section, we bound the approximation factor of Algorithm 1 for monotone submodular $k$-partitioning. The following is the main theorem of this section. 

\begin{theorem} \label{Thm: 4/3-approx mono sm k-cut algo}
The approximation factor of Algorithm 1 for non-negative monotone  submodular $k$-partitioning is $\frac{4}{3}-\frac{4}{9n+3}$, where $n$ is the size of the ground set. 
\end{theorem}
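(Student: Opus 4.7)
The plan is to case-split on whether the principal partition sequence contains a partition with exactly $k$ parts. If it does, Lemma~\ref{claim:exact-k} immediately yields optimality, so assume it does not. Let $\mcp_{i-1}, \mcp_i$ be the two consecutive partitions in the sequence with $a := |\mcp_{i-1}| < k < |\mcp_i| =: b$, and let $S \in \mcp_{i-1}$ and $\mcp' = \{B_1,\ldots,B_c\}$ with $c := b-a+1$ and $f(B_1) \le \cdots \le f(B_c)$ be as defined in Algorithm~1. I will use the shorthand $\alpha := \sum_{T \in \mcp_{i-1}\setminus\{S\}} f(T)$, $x := f(S)$, and $y := \sum_{j=1}^c f(B_j)$, so that $f(\mcp_{i-1}) = \alpha + x$ and $f(\mcp_i) = \alpha + y$.

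The first step is to derive a sharp upper bound on the algorithm's output cost by interleaving monotonicity and submodularity. Since the $B_j$ are sorted, $\sum_{j \le k-a} f(B_j) \le p y$ where $p := (k-a)/c$. For the merged part $B := \bigcup_{j > k-a} B_j$, monotonicity gives $f(B) \le f(S) = x$ (since $B \subseteq S$), while submodularity together with $f(\emptyset) \ge 0$ gives $f(B) \le \sum_{j > k-a} f(B_j) = y - \sum_{j \le k-a} f(B_j)$. Combining and writing $u := \sum_{j \le k-a} f(B_j) \le py$,
\[
\text{ALG} \;\le\; \alpha + u + \min(x,\, y-u) \;=\; \alpha + \min(u+x,\, y) \;\le\; \alpha + \min(py + x,\, y).
\]

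For the lower bound on $\text{OPT}$, I will invoke Lemma~\ref{lemma: f long ineq}(i) with $\mu := (b-k)/(b-a)$ and $\lambda := (k-a)/(b-a)$ (so $\mu + \lambda = 1$) to obtain $\text{OPT} \ge \alpha + \mu x + \lambda y$. The ratio $(\alpha + N)/(\alpha + D)$ is non-increasing in $\alpha$ whenever $N \ge D$, so the worst case has $\alpha = 0$. Normalizing by $y$ and setting $s := x/y$, the feasible range is $s \in [1/c, 1]$ (coming from submodularity $y \ge x$ and monotonicity $f(B_j) \le f(S)$ which gives $y \le c x$), and the task reduces to maximizing
\[
\frac{\min(p + s,\, 1)}{\mu s + \lambda}.
\]
Splitting on $s \le 1-p$ versus $s > 1-p$ and computing $\lambda - \mu p = (k-a)(k-a+1)/((b-a)(b-a+1)) > 0$ shows that the ratio is increasing in $s$ on the first sub-interval and decreasing on the second, so the maximum in each case is attained at $s = 1 - p$ and equals $1/(1 - \mu p)$.

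Finally, to convert this into a bound in $n$, let $m := b - a$ and $j := k - a$ with $1 \le j \le m - 1$; then $\mu p = j(m-j)/(m(m+1)) \le m/(4(m+1))$ with equality (for real $j$) at $j = m/2$. Hence $1/(1 - \mu p) \le 4(m+1)/(3m+4)$, which is increasing in $m$, so using $m \le n - 1$ yields
\[
\frac{\text{ALG}}{\text{OPT}} \;\le\; \frac{4n}{3n + 1} \;=\; \frac{4}{3} - \frac{4}{9n + 3},
\]
as required. I expect the main obstacle to be the tightness of the upper bound on $\text{ALG}$: without the monotonicity bound $f(B) \le f(S)$, one only recovers $\text{ALG} \le \alpha + y$ and the ratio degrades to $2$, so it is essential to take the minimum of the monotone and submodular bounds before optimizing over $s$. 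Once that is in place, the remaining optimization is a routine one-variable case analysis.
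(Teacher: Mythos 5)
Your proposal is correct and follows essentially the same route as the paper's proof of Lemma \ref{lemma:monotone 4/3}: the same two competing upper bounds on the output (the monotonicity bound $f(\bigcup_{j>k-a}B_j)\le f(S)$ versus the coarsening bound $f(\mcp)\le f(\mcp_i)$, which is the $y$-branch of your $\min$) are balanced against the convex-combination lower bound of Lemma \ref{lemma: f long ineq}(i), and your optimum $1/(1-\mu p)$ equals $1+AB/(A^2+AB+B^2+A+B)$, the identical rational function the paper maximizes before the final $A^2+B^2\ge 2AB$ step. The only substantive difference is the normalization: you optimize over $s=f(S)/f(\mcp')$ after zeroing out the untouched parts $\alpha$, whereas the paper optimizes over $c=f(\mcp_{i-1})/f(\mcp^\ast)$ and therefore also needs Lemma \ref{lemma: f long ineq}(ii) to confine $c$ to $[0,1]$ --- your version bypasses part (ii) entirely, which is a mild streamlining rather than a different argument.
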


The asymptotic approximation factor of $4/3$ achieved by Algorithm 1 is the best possible for non-negative monotone submodular $k$-partition: for every constant $\epsilon>0$, there does not exist an algorithm that achieves a $(4/3-\epsilon)$-approximation using polynomial number of function evaluation queries \cite{San21}. 
We will also exhibit examples to show the tightness of the approximation factor for Algorithm 1 after proving the theorem. 
The proof of Theorem  \ref{Thm: 4/3-approx mono sm k-cut algo} follows from Lemma \ref{claim:exact-k} and Lemma \ref{lemma:monotone 4/3} shown below. 

\begin{lemma}\label{lemma:monotone 4/3}
Let $k\geq 2$ be an integer, $f:2^V\to\R_{\ge 0}$ be a non-negative monotone submodular function on a ground set $V$ of size $n$, $\mcp^\ast$ be a $k$-partition of $V$ that minimizes $f(\mcp^\ast)$, and $\mcp_1, \mcp_2, \ldots, \mcp_r$ be a principal partition sequence of the submodular function $f$ satisfying the conditions of Theorem \ref{Thm: PPS}. 
Suppose $|\mcp_j|\neq k$ for all $j\in[r]$.
Then, the partition $\mcp$ returned by Algorithm 1
satisfies 
\[f(\mcp)\leq \left(\frac{4}{3}-\frac{4}{9n+3}\right)f(\mcp^\ast).\]
\end{lemma}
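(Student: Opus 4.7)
The plan is to derive two upper bounds on $f(\mcp)$ and combine them with the lower bound from Lemma \ref{lemma: f long ineq}(i); the key point is that neither upper bound alone is enough, and it is their minimum played against the lower bound that produces the $4/3$ factor. Write $a := f(\mcp_{i-1})$, $b := f(\mcp_i)$, $t := k - |\mcp_{i-1}|$, and $m := |\mcp'|$; these satisfy $1 \le t \le m - 2$ (since $|\mcp_{i-1}| < k < |\mcp_i| = |\mcp_{i-1}| - 1 + m$) and $3 \le m \le n$. Set $D := \bigcup_{j=t+1}^{m} B_j$, so that
\[f(\mcp) \;=\; (a - f(S)) + \sum_{j=1}^{t} f(B_j) + f(D).\]

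First I would bound $f(D)$ in two ways. By monotonicity and $D \subseteq S$, $f(D) \le f(S)$, which gives $f(\mcp) \le a + \sum_{j=1}^{t} f(B_j)$. By subadditivity (which follows from submodularity together with $f(\emptyset) \ge 0$), $f(D) \le \sum_{j=t+1}^{m} f(B_j)$, which gives $f(\mcp) \le (a - f(S)) + \sum_{j=1}^{m} f(B_j) = b$. The averaging inequality $\sum_{j=1}^{t} f(B_j) \le (t/m) \sum_{j=1}^{m} f(B_j) \le (t/m) b$ (the last step uses $\sum_{j=1}^m f(B_j) = b - f(\mcp_{i-1}\setminus\{S\}) \le b$ by non-negativity) then yields
\[f(\mcp) \;\le\; \min\!\left\{\,b,\; a + \tfrac{t}{m} b\,\right\}.\]

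Next I would combine this with the linear lower bound $(m - 1 - t) a + t b \le (m - 1)\, f(\mcp^{\ast})$ from Lemma \ref{lemma: f long ineq}(i), along with $a, b \ge 0$. The approximation ratio is then at most the maximum of $\min\{b,\, a + tb/m\}$ under this constraint. A short LP-style analysis shows the maximum is attained where the two branches of the $\min$ coincide, i.e.\ $a = (m - t) b/m$; substituting into the linear constraint yields $b = m(m-1)\, f(\mcp^{\ast})/Q(t)$, where
\[Q(t) \;:=\; (m - 1 - t)(m - t) + m t \;=\; t^2 - (m - 1) t + m(m - 1).\]
Hence $f(\mcp) \le \frac{m(m-1)}{Q(t)}\, f(\mcp^{\ast})$. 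The function $Q$ is a quadratic in $t$ minimized at $t = (m-1)/2$ with value $(m-1)(3m+1)/4$, so $f(\mcp)/f(\mcp^{\ast}) \le 4m/(3m + 1)$. Since $4m/(3m+1)$ is increasing in $m$ and $m \le n$, this gives $f(\mcp)/f(\mcp^{\ast}) \le 4n/(3n+1) = 4/3 - 4/(9n+3)$, as required.

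The main delicacy will be that each of the two upper bounds on $f(\mcp)$ is individually weak: $f(\mcp) \le b$ gives a ratio as bad as $(m-1)/t$, and $f(\mcp) \le a + tb/m$ gives a ratio approaching $2$. The $4/3$ only emerges when both bounds are enforced simultaneously against the single lower bound from Lemma \ref{lemma: f long ineq}(i), with the tight regime at $t \approx m/2$ and $m = n$; I expect this min-of-two-bounds optimization to be where the substance of the argument lies.
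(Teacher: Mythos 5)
Your proof is correct and takes essentially the same approach as the paper's: two upper bounds on $f(\mcp)$---one via monotonicity applied to the merged part (giving $f(\mcp)\le a+\tfrac{t}{m}b$) and one via coarsening $\mcp_i$ (giving $f(\mcp)\le b$)---each combined with the lower bound of Lemma \ref{lemma: f long ineq}(i), followed by optimizing the minimum of the two resulting linear expressions. The paper parameterizes via $A:=k-|\mcp_{i-1}|$, $B:=|\mcp_i|-k$, and $c:=f(\mcp_{i-1})/f(\mcp^\ast)$, which is equivalent to your $(a,b,t,m)$ setup: your $Q(t)$ equals the paper's $A+A^2+AB+B^2+B$, and $m(m-1)/Q(t)=1+AB/Q(t)$, matching the paper's final expression.
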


\begin{proof}
Let $\mcp_{i-1},\mcp_i$ be the partitions such that $|\mcp_{i-1}|<k<|\mcp_i|$. Let $S$ and $\mcp'=\{B_1,B_2,\ldots,B_{|\mcp'|}\}$ be as in Algorithm 1. 

Firstly, since $\cup_{j=k-|\mcp_{i-1}|+1}^{|\mcp'|} B_j \subseteq S$ and $f$ is monotone, we have that 
\[
f\left(\cup_{j=k-|\mcp_{i-1}|+1}^{|\mcp'|} B_j\right) \le f(S).
\]
Secondly, by our choice of $B_1,B_2,\ldots,B_{k-|\mcp_{i-1}|}$, we know that
\begin{align*}
    \sum_{j=1}^{k-|\mcp_{i-1}|}f(B_j)\leq\frac{k-|\mcp_{i-1}|}{|\mcp'|}f(\mcp').
\end{align*}
Hence,
\begin{align*}
    f(\mcp)&=f(\mcp_{i-1})-f(S)+\sum_{j=1}^{k-|\mcp_{i-1}|}f(B_j)+f\left(\bigcup_{j=k-|\mcp_{i-1}|+1}^{|\mcp'|}B_j\right)
    \\&\leq f(\mcp_{i-1})+\sum_{j=1}^{k-|\mcp_{i-1}|}f(B_j)
    \\&\leq f(\mcp_{i-1})+\frac{k-|\mcp_{i-1}|}{|\mcp'|}f(\mcp')
    \\&=f(\mcp_{i-1})+\frac{k-|\mcp_{i-1}|}{|\mcp_i|-|\mcp_{i-1}|+1}(f(\mcp_i)+f(S)-f(\mcp_{i-1})),
\end{align*}
where the last equality follows from the fact that $f(\mcp_{i-1})-f(S)+f(\mcp')=f(\mcp_i)$ and $|\mcp'|=|\mcp_i|-|\mcp_{i-1}|+1$. Rearranging, we have that 
\begin{align}
    f(\mcp)&\leq \frac{f(\mcp_{i-1})}{|\mcp_i|-|\mcp_{i-1}|+1}(|\mcp_i|-k+1)+\frac{f(\mcp_{i})}{|\mcp_i|-|\mcp_{i-1}|+1}(k-|\mcp_{i-1}|)\notag\\
    &\quad \quad \quad \quad +\frac{f(S)}{|\mcp_i|-|\mcp_{i-1}|+1}(k-|\mcp_{i-1}|)\notag
    \\&=\left(\frac{|\mcp_i|-|\mcp_{i-1}|}{|\mcp_i|-|\mcp_{i-1}|+1}\right)\left(\frac{f(\mcp_{i-1})}{|\mcp_i|-|\mcp_{i-1}|}(|\mcp_i|-k+1)+\frac{f(\mcp_{i})+f(S)}{|\mcp_i|-|\mcp_{i-1}|}(k-|\mcp_{i-1}|)\right)\notag
    \\&\leq \left(\frac{|\mcp_i|-|\mcp_{i-1}|}{|\mcp_i|-|\mcp_{i-1}|+1}\right)\left(f(\mcp^\ast) + \frac{f(\mcp_{i-1})}{|\mcp_i|-|\mcp_{i-1}|} +\frac{f(S)}{|\mcp_i|-|\mcp_{i-1}|}(k-|\mcp_{i-1}|)\right)\notag\\ 
    &\leq \left(\frac{|\mcp_i|-|\mcp_{i-1}|}{|\mcp_i|-|\mcp_{i-1}|+1}\right)\left(f(\mcp^\ast) +\frac{f(\mcp_{i-1})}{|\mcp_i|-|\mcp_{i-1}|}(1+k-|\mcp_{i-1}|)\right).  \label{eqn: mono - 1}
\end{align}
where the second inequality above is by {by Lemma \ref{lemma: f long ineq}(i)} and the third inequality above is because $f(S)\leq f(\mcp_{i-1})$. 
Inequality \eqref{eqn: mono - 1} implies that 
\begin{align*}
    f(\mcp)&\leq \left(\frac{|\mcp_i|-|\mcp_{i-1}|}{|\mcp_i|-|\mcp_{i-1}|+1}\right)\left(f(\mcp^\ast)+f(\mcp_{i-1})+\frac{f(\mcp_{i-1})}{|\mcp_i|-|\mcp_{i-1}|}(1+k-|\mcp_i|)\right)
    \\&\leq \left(1-\frac{1}{|\mcp_i|-|\mcp_{i-1}|+1}\right)\left(f(\mcp^\ast)+f(\mcp_{i-1})\right)\quad(\text{since }k<|\mcp_i|)
    \\&\leq 2f(\mcp^\ast),
\end{align*}
where the last inequality is because $f(\mcp_{i-1})\leq f(\mcp^\ast)$ by Lemma \ref{lemma: f long ineq}(ii). 
The above analysis shows that the approximation factor is at most $2$. We tighten the analysis now. 
As a consequence of the above inequality, we may assume that $f(\mcp^\ast)\neq 0$ because if $f(\mcp^\ast)=0$, then the returned $k$-partition $\mcp$ also satisfies $f(\mcp)=0$ and thus, is optimal.
Let $c:=f(\mcp_{i-1})/f(\mcp^\ast)$. By Lemma \ref{lemma: f long ineq}(ii), we have that $f(\mcp_{i-1})\le f(\mcp^{\ast})$ and hence, $c\in [0, 1]$. For convenience, we define $A:=k-|\mcp_{i-1}|$ and $B:=|\mcp_i|-k$ and note that $A, B\ge 1$.
Using this notation, we may rewrite inequality \eqref{eqn: mono - 1} as
\begin{align}
    f(\mcp)&\leq \left(\frac{A+B}{A+B+1}\right)\left(f(\mcp^\ast)+\frac{1+A}{A+B}f(\mcp_{i-1})\right)\notag
    \\&=\left(\frac{A+B}{A+B+1}\right)\left(1+\frac{1+A}{A+B}\cdot c\right)f(\mcp^\ast).\label{eqn: 4/3 - 1}
\end{align}
By Lemma \ref{lemma: f long ineq}(i), we have
\[f(\mcp^\ast) \ge \left(\frac{B}{A+B}\right)f(\mcp_{i-1})+\left(\frac{A}{A+B}\right)f(\mcp_i) = \left(\frac{B}{A+B}\right)cf(\mcp^\ast)+\left(\frac{A}{A+B}\right)f(\mcp_i).\]
Rearranging, we have
\[f(\mcp_i)\leq \left(1-\frac{B}{A+B}\cdot c\right)\left(\frac{A+B}{A}\right)f(\mcp^\ast) = \left(\frac{A+B}{A}-\frac{B}{A}\cdot c\right) f(\mcp^\ast).\]
Since $\mcp$ is obtained by coarsening $\mcp_i$, we have $f(\mcp)\leq f(\mcp_i)$ by submodularity and non-negativity of $f$. This implies
\begin{align}
    f(\mcp)\leq 
    \left(\frac{A+B}{A}-\frac{B}{A}\cdot c\right) f(\mcp^\ast).\label{eqn: 4/3 - 2}
\end{align}
Combining inequalities \eqref{eqn: 4/3 - 1} and \eqref{eqn: 4/3 - 2}, we have
\begin{align}
    \frac{f(\mcp)}{f(\mcp^\ast)} \leq \max_{c\in[0,1]}\min\left\{\left(\frac{A+B}{A+B+1}\right)\left(1+\frac{1+A}{A+B}\cdot c\right),\; \frac{A+B}{A}-\frac{B}{A}\cdot c\right\}.\label{eqn:min-of-two-terms-monotone}
\end{align}
Thus, in order to upper bound the approximation factor, it suffices to upper bound the right hand side of inequality (\ref{eqn:min-of-two-terms-monotone}). 
Since $\left(\frac{A+B}{A+B+1}\right)\left(1+\frac{1+A}{A+B}\cdot c\right)$ and $\frac{A+B}{A}-\frac{B}{A}\cdot c$ are both linear in $c$, with the former increasing and the latter decreasing as a function of $c$, the value  $$\max_{c\in\R}\min\left\{\left(\frac{A+B}{A+B+1}\right)\left(1+\frac{1+A}{A+B}\cdot c\right),\; \frac{A+B}{A}-\frac{B}{A}\cdot c\right\}$$ is achieved when the two terms are equal. Setting $\left(\frac{A+B}{A+B+1}\right)\left(1+\frac{1+A}{A+B}\cdot c^\ast\right)=\frac{A+B}{A}-\frac{B}{A}\cdot c^\ast$ and solving for $c^\ast$, we get
\[c^\ast=\frac{\frac{A+B}{A}-\frac{A+B}{A+B+1}}{\frac{1+A}{A+B+1}+\frac{B}{A}}=\frac{\frac{B}{A}+\frac{1}{A+B+1}}{\frac{1+A}{A+B+1}+\frac{B}{A}}.\]
Plugging $c=c^\ast$ into $\frac{A+B}{A}-\frac{B}{A}\cdot c$ yields
\begin{align*}
    \frac{f(\mcp)}{f(\mcp^\ast)}&\leq \max_{c\in[0,1]}\min\left\{\frac{A+B}{A+B+1}\left(1+\frac{1+A}{A+B}\cdot c\right),\; \frac{A+B}{A}-\frac{B}{A}\cdot c\right\}
    \\&\leq \frac{A+B}{A}-\frac{B}{A}\cdot c^\ast
    \\&= 1+\frac{B}{A}(1-c^\ast)
    \\&=1+\frac{B}{A}\left(1-\frac{\frac{B}{A}+\frac{1}{A+B+1}}{\frac{1+A}{A+B+1}+\frac{B}{A}}\right)
    \\&=1+\frac{\frac{B}{A+B+1}}{\frac{1+A}{A+B+1}+\frac{B}{A}}
    \\&=1+\frac{AB}{A+A^2+AB+B^2+B}
    \\&\leq 1+\frac{AB}{3AB+A+B} \quad(\text{since }A^2+B^2\geq 2AB)
    \\&=\frac{4}{3}-\frac{1}{3}\cdot \frac{A+B}{3AB+A+B}
    \\&=\frac{4}{3}-\frac{1}{3}\cdot \frac{A+B}{3AB/2+3AB/2+A+B} 
    \\&\leq\frac{4}{3}-\frac{1}{3}\cdot \frac{A+B}{3AB/2+3(A^2+B^2)/4+A+B}\quad(\text{since }AB\leq \frac{A^2+B^2}{2})
    \\&=\frac{4}{3}-\frac{1}{3}\cdot \frac{A+B}{3(A+B)^2/4+A+B}
    \\&=\frac{4}{3}-\frac{1}{3}\cdot \frac{1}{3(A+B)/4+1}
    \\&\leq \frac{4}{3}-\frac{4}{9n+3}.
\end{align*}
The last inequality above is because $A+B=|\mcp_i|-|\mcp_{i-1}|\leq n-1$.
\end{proof}

\begin{remark}\label{remark:mono-4/3-tight}
The approximation factor of Algorithm 1 for non-negative monotone submodular functions is at least $4/3-4/(9n+3)$. We show this for $n=3$ using the following example:
Let $V=\{a,b,c\}$, $k=2$, and $f:2^V\to \R_{\ge 0}$ be defined by
\begin{align*}
    &f(\emptyset)=0,\;f(\{a\})=1,\;f(\{b\})=f(\{c\})=1+\epsilon,
    \\&f(\{a,b\}) = f(\{a,c\}) =\frac{3}{2}+\epsilon ,\;f(\{b,c\}) =f(V)= 2+2\epsilon.
\end{align*}
Submodularity and monotonicity of $f$ can be verified by considering all possible subsets. Moreover, the principal partition sequence of this instance is $\{V\},\{\{a\},\{b\},\{c\}\}$. Thus, Algorithm 1 returns the $2$-partition $\{\{a\},\{b,c\}\}$, whose objective value is $3+2\epsilon$. An optimum $2$-partition is $\{\{c\},\{a,b\}\}$, whose objective value is $5/2+\epsilon$. Thus, the approximation factor is  $\frac{3+2\epsilon}{5/2+\epsilon}\to \frac{6}{5}$ as $\epsilon\to 0$. We note that for $n=3$, the approximation factor guaranteed by Theorem \ref{Thm: 4/3-approx mono sm k-cut algo} is $\frac{4}{3}-\frac{4}{9n+3} = \frac{6}{5}$. 
\end{remark}

We conclude the section by showing that there exist monotone submodular functions for which the approximation factor of Algorithm 1 is at least $4/3$ asymptotically (i.e., as $n\to \infty$).
\begin{lemma}\label{lemma:monotone-tight-example}
For every odd positive integer $n\ge 5$, there exists a function $k=k(n)$ (i.e., $k$ is a function of $n$) and an instance of non-negative monotone submodular $k$-partition over an $n$-element ground set such that the approximation factor of Algorithm 1 on that instance is arbitrarily close to $4/3-4/(3n+3)$. 
\end{lemma}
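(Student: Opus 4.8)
The plan is to take $f$ to be a weighted coverage function arising from a tree, so that the forest structure forces the principal partition sequence to have only two terms. Fix odd $n=2m+1$ with $m\ge 2$, set $k:=m+1$, and let $T$ be the caterpillar on vertex set $\{0,1,\dots,2m+1\}$ whose spine is the path $0\text{--}2\text{--}\cdots\text{--}2m$ and which has a pendant leaf $2j+1$ attached to $2j$ for each $j\in\{0,\dots,m\}$. Then $T$ has $2m+1$ edges and $2m+2$ vertices, and its $m+1$ pendant edges form a perfect matching. Take $V$ to be the edge set of $T$ and define $f(P):=\tfrac12\,|V_T(P)|$ for $P\subseteq V$, where $V_T(P)$ is the set of vertices of $T$ incident to some edge of $P$. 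Since $f$ is the coverage function of the set system with universe $V(T)$ (each vertex weighted $\tfrac12$) in which an edge covers its two endpoints, $f$ is non-negative, monotone, and submodular, with $f(\emptyset)=0$, $f(\{v\})=1$ for every $v\in V$, and $f(V)=\tfrac12|V(T)|=m+1$.

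The heart of the argument is the inequality $f(\mcp)\ge (n+|\mcp|)/2$ for every partition $\mcp$ of $V$: each part $P$ induces a subforest of $T$ with $|P|$ edges and hence spans $|P|+c(P)$ vertices, where $c(P)\ge 1$ is its number of components, so summing over $\mcp$ gives $f(\mcp)=\tfrac12\bigl(n+\sum_{P}c(P)\bigr)\ge\tfrac12(n+|\mcp|)$. As $f(V)=\tfrac{n+1}{2}$, this shows $(f(\mcp)-f(V))/(|\mcp|-1)\ge\tfrac12$ for every $\mcp$ with $|\mcp|\ge 2$, with equality when $\mcp=\mcq:=\{\{v\}:v\in V\}$. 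From this one checks that $g_f(b)=\min\{f(V)-b,\ f(\mcq)-bn\}$ for all $b$ (apply the inequality with $|\mcp|\ge 1$ when $b\le\tfrac12$ and with $|\mcp|\le n$ when $b\ge\tfrac12$), a piecewise-linear function with a single breakpoint at $b=\tfrac12$; hence, by Theorem \ref{Thm: PPS}, the principal partition sequence of $f$ is exactly $\mcp_1=\{V\},\ \mcp_2=\mcq$. (Alternatively, after perturbing the vertex weights to break ties one may invoke Proposition \ref{prop:unique-principal-partition} directly.)

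Since $1<k<n$, Algorithm 1 refines $\mcp_1$ using $\mcp_2$; as all $n$ singletons have value $1$, a valid execution may peel off the $m$ spine edges as singleton parts and keep the $m+1$ pendant edges (the perfect matching $M$) together in one part, returning $\mcp$ with $f(\mcp)=m\cdot 1+f(M)=m+\tfrac12|V(T)|=2m+1=n$. On the other hand, letting $W'$ be the union of the $m$ spine edges with a single pendant edge, $W'$ induces a subtree with $|V_T(W')|=m+2$, so the $k$-partition $\mcp^\ast$ consisting of $W'$ together with the $m$ remaining edges as singletons satisfies $f(\mcp^\ast)=\tfrac{m+2}{2}+m=\tfrac{3m+2}{2}$ (and $\mcp^\ast$ is in fact optimal, by the inequality above). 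Therefore the approximation factor of Algorithm 1 on this instance is at least
\[
\frac{f(\mcp)}{f(\mcp^\ast)}=\frac{2m+1}{(3m+2)/2}=\frac{4n}{3n+1}=\frac43-\frac{4}{9n+3},
\]
which exceeds $\tfrac43-\tfrac4{3n+3}$ for every $n\ge 3$ and tends to $\tfrac43$ as $n\to\infty$, establishing the lemma (indeed with a somewhat stronger bound).

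The one genuinely delicate step is verifying that the principal partition sequence really is just $\{V\},\mcq$, with no partition of intermediate size slipping in; this is exactly where the tree structure is essential, since a nonempty edge subset of a forest always spans strictly more vertices than its cardinality, which is precisely the inequality $f(\mcp)\ge(n+|\mcp|)/2$ used to pin down $g_f$. The remaining choices are calibrated by that same bound: using the caterpillar rather than, say, a path on $2m+2$ vertices is what keeps $f(\mcp^\ast)$ as small as $(3m+2)/2$ and hence makes the ratio as large as $4/3-4/(9n+3)$.
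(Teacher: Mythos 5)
Your construction is correct and in fact proves something stronger than the paper does: your ratio $\tfrac{4n}{3n+1}=\tfrac43-\tfrac{4}{9n+3}$ exactly matches the upper bound of Theorem \ref{Thm: 4/3-approx mono sm k-cut algo}, whereas the paper's instance only reaches $\tfrac43-\tfrac{4}{3n+3}$. The route is genuinely different. The paper takes $f$ to be a truncated sum of a shifted-modular function on $U$ and a modular function on $D$, and must verify submodularity by a five-case analysis and the two-partition structure of the principal partition sequence via Proposition \ref{prop:unique-principal-partition} (whose strict inequality their $\epsilon$-perturbations are designed to enforce). You instead take the coverage function of a caterpillar tree on its edge set; submodularity is free, and the single identity $f(\mcp)=\tfrac12\bigl(n+\sum_{P\in\mcp}c(P)\bigr)$ simultaneously pins down $g_f$, certifies optimality of $\mcp^\ast$, and yields the ratio. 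A pleasant by-product is that your instance is a graph coverage function, the very special case the paper singles out as having open approximability, so you show the $4/3$ analysis of Algorithm 1 is tight already there.

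Two points need care, both fixable. First, Proposition \ref{prop:unique-principal-partition} does \emph{not} apply to your $f$, even after perturbing the leaf weights: any partition all of whose parts induce connected subtrees achieves equality $\tfrac{f(\mcp)-f(V)}{|\mcp|-1}=\tfrac12$, and since every pendant leaf is covered by exactly one part of every partition, the leaf-weight perturbation adds the same constant $(m+1)\epsilon$ to $f(\mcp)$ for all $\mcp$ and cannot restore strictness. Your primary argument via $g_f$ does work, but you should spell out the last step: a partition $\mcp'\neq\{V\},\mcq$ satisfies $g_{\mcp'}(b)=g_f(b)$ only at the single point $b=\tfrac12$, while conditions 3 and 5 of Theorem \ref{Thm: PPS} would require any intermediate partition in the sequence to agree with $g_f$ on a nondegenerate interval $[b_j,b_{j+1}]$; hence $r=2$ and the sequence is forced to be $\{V\},\mcq$. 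Second, in the unperturbed instance all $n$ singletons have value $1$, so the bad output is only one of many valid executions --- a benign tie-break (peeling off pendant edges first) would actually return an optimal partition. The leaf-weight perturbation \emph{is} the right fix here: it makes spine edges strictly cheaper than pendant edges, forcing $B_1,\dots,B_{k-1}$ to be the spine edges, while leaving the principal partition sequence and the optimum unchanged up to the uniform additive $(m+1)\epsilon$. So the perturbation should be deployed to fix the edge ordering, not to invoke Proposition \ref{prop:unique-principal-partition}.
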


\begin{proof}
Let $n\geq 5$ be an arbitrary odd number, $k=\frac{n+1}{2}$, and let $V=\{v_1,\ldots, v_n\}$ be the ground set. Moreover, let $U:=\{v_1,\ldots, v_{\frac{n-1}{2}}\}$ and $D:=\{v_{\frac{n+1}{2}},\ldots, v_n\}$ so that $V=U\uplus D$. Let $g:2^U\to \R_{\geq 0}$ be a function over the ground set $U$ defined by 
\[g(S) = \begin{cases}\frac{1}{2}+\frac{1}{2}\cdot |S| & \ \text{ if }\ \emptyset \neq S\subseteq U, \\ 0 & \ \text{ if } S=\emptyset.\end{cases}\]
and $f:2^V\to \R_{\geq 0}$ be defined by
\[f(S):=\min\left\{g(S\cap U)+(1+\epsilon)|S\cap D|,\;\frac{n+1}{2}\right\}\ \forall\ S\subseteq V,\]
where $\epsilon>0$ is infinitesimally small.  
The function $f$ satisfies $f(\emptyset) = 0$, $f(V)=\frac{n+1}{2}$, $f(U) = \frac{n+1}{4}$, $f(D)=\frac{n+1}{2}$, $f(\{v\})=1$ for each $v\in U$, and $f(\{v\})=1+\epsilon$ for each $u\in D$. We will use $\mcq$ to denote the partition of $V$ into $n$ singleton sets, and it follows that $f(\mcq) = n+\frac{(n+1)\epsilon}{2}$. For convenience, we will write $h(S):=g(S\cap U)+|S\cap D|$ for all $S\subseteq V$ so that $f(S) = \min\{h(S),\frac{n+1}{2}\}$ for all $S\subseteq V$ throughout the proof.

\begin{claim}
The function $f$ is submodular and monotone.
\end{claim}
\begin{proof}
Let $A,B\subseteq V$ be arbitrary. 
We notice that the function $h$ is monotone, and
\[h(A)+h(B)\geq h(A\cap B)+h(A\cup B).\]
Without loss of generality, we may assume $h(A)\geq h(B)$, and thus $h(A\cup B)\geq h(A)\geq h(B)\geq h(A\cap B)$. We consider the following five cases:
\begin{enumerate}
    \item $h(A\cup B)\leq \frac{n+1}{2}$. 
    \[\implies f(A)+f(B) = h(A)+h(B)\geq h(A\cap B)+h(A\cup B) = f(A\cap B)+f(A\cup B).\]
    \item $h(A)\leq \frac{n+1}{2}<h(A\cup B)$.
    \[\implies f(A)+f(B) = h(A)+h(B)\geq h(A\cap B)+h(A\cup B) > h(A\cap B)+\frac{n+1}{2} = f(A\cap B)+f(A\cup B).\]
    \item $h(B)\leq \frac{n+1}{2}<h(A)$.
    \[\implies f(A)+f(B) = \frac{n+1}{2}+h(B)\geq \frac{n+1}{2}+h(A\cap B) = f(A\cup B)+f(A\cap B).\]
    \item $h(A\cap B)\leq \frac{n+1}{2}<h(B)$.
    \[\implies f(A)+f(B) = \frac{n+1}{2}+\frac{n+1}{2}\geq \frac{n+1}{2}+h(A\cap B) = f(A\cup B)+f(A\cap B).\]
    \item $\frac{n+1}{2}<h(A\cap B)$.
    \[\implies f(A)+f(B) = \frac{n+1}{2}+\frac{n+1}{2} = f(A\cup B)+f(A\cap B).\]
\end{enumerate}
Since exactly one of these five cases holds, we have proved the submodularity of $f$. The monotonicity of $f$ is implied by the monotonicity of $h$.
\end{proof}
\begin{claim}\label{claim: 4/3 tight example - 2}
For every partition $\mcp\neq\mcq, \{V\}$, the function $f$ satisfies
\[\frac{f(\mcp)-f(V)}{|\mcp|-1}> \frac{f(\mcq)-f(V)}{n-1}.\]
\end{claim}
\begin{proof}
First, we note that $f(V)=\frac{n+1}{2}$ and $\frac{f(\mcq)-f(V)}{n-1} = \frac{n+(n+1)\epsilon/2-(n+1)/2}{n-1}=\frac{1}{2}+\frac{(n+1)\epsilon}{2n-2}$, and thus the desired inequality is equivalent to  
\[f(\mcp)\geq \frac{n}{2}+\frac{1}{2}|\mcp|+\epsilon(|\mcp|-1)\frac{n+1}{2n-2}.\]
We note that there exists at most one part $P\in\mcp$ such that $h(P)\geq \frac{n+1}{2}$. To see this, assume that two distinct parts $P,P'\in\mcp$ satisfy $h(P)\geq\frac{n+1}{2}$ and $h(P')\geq\frac{n+1}{2}$. This implies
\begin{align*}
    n+1&\leq h(P)+h(P')\leq 2\cdot\frac{1}{2}+\frac{1}{2}|(P\cup P')\cap U|+(1+\epsilon)|(P\cup P')\cap D|
    \\&\leq 1+\frac{1}{2}\cdot \frac{n-1}{2}+(1+\epsilon)\frac{n+1}{2}=\frac{3n+1}{4}+1+\frac{(n+1)\epsilon}{2}<n+1,
\end{align*}
because $\epsilon$ is infinitesimal and $n\geq 5$, yielding a contradiction. Therefore, we may consider two cases: $\mcp$ containing no part $P$ such that $h(P)\geq\frac{n+1}{2}$, and $\mcp$ containing exactly one part $P$ such that $h(P)\geq \frac{n+1}{2}$.

Suppose $\mcp$ contains no part $P$ such that $h(P)\geq\frac{n+1}{2}$. We will use $t$ to denote the number of parts of $\mcp$ that each intersects $U$ non-trivially. Then, $|\mcp|-t$ represents the number of parts of $\mcp$ that are each contained in $D$, and we have $|\mcp|-t\leq |D|=\frac{n+1}{2}$. It follows that
\begin{align*}
    f(\mcp)-\frac{n}{2}-\frac{|\mcp|}{2}-\epsilon(|\mcp|-1)\frac{n+1}{2n-2}
    &=\frac{1}{2}\cdot t+\frac{1}{2}\cdot\frac{n-1}{2}+(1+\epsilon)\frac{n+1}{2}\\
    & \quad \quad \quad \quad \quad \quad -\frac{n}{2}-\frac{|\mcp|}{2}-\epsilon(|\mcp|-1)\frac{n+1}{2n-2}
    \\&=\frac{n+1}{4}-\frac{|\mcp|-t}{2}+\epsilon\left(\frac{n+1}{2}-\frac{(|\mcp|-1)(n+1)}{2n-2}\right)
    \\&\geq \frac{n+1}{4}-\frac{n+1}{4}+ \epsilon\left(\frac{n+1}{2}-\frac{(|\mcp|-1)(n+1)}{2n-2}\right)
    \\&\geq\epsilon\left(\frac{n+1}{2}-\frac{(n-2)(n+1)}{2n-2}\right) \ \ (\text{since }|\mcp|\leq n-1)
    \\& = \epsilon\cdot\frac{n+1}{2n-2}
    \\&>0.
\end{align*}
Now, suppose $\mcp$ contains exactly one part $P\in\mcp$ such that $h(P)\geq\frac{n+1}{2}$. Each other part $P'\neq P$ of $\mcp$ satisfies $f(P')\geq 1$. It follows that 
\begin{align*}
    f(\mcp)\geq \frac{n+1}{2}+1\cdot (|\mcp|-1)=\frac{n}{2}+|\mcp|-\frac{1}{2}> \frac{n}{2}+\frac{1}{2}\cdot |\mcp|+\epsilon(|\mcp|-1)\frac{n+1}{2n-2}
\end{align*}
since $|\mcp|\geq 2$ and $\epsilon$ is arbitrarily small.
\end{proof}

By Proposition \ref{prop:unique-principal-partition}, Algorithm 1 returns the partition $\{\{v_1\},\ldots, \{v_{\frac{n-1}{2}}\}, D\}$ (because $\{v_1\},\ldots, \{v_{\frac{n-1}{2}}\}$ are the $k-1$ singleton sets that minimize the $f$ values), whose objective is $\frac{n-1}{2}+\frac{n+1}{2} = n$. The partition $\{\{v_1,\ldots, v_{\frac{n+1}{2}}\}, \{v_{\frac{n+3}{2}}\},\ldots, \{v_n\}\}$ has objective $\frac{n+1}{4}+(1+\epsilon)+(1+\epsilon)\frac{n-1}{2}=\frac{3n+3}{4}+\frac{(n+1)\epsilon}{2}$. The approximation factor is at least
\[\frac{n}{\frac{3n+3}{4}+\frac{(n+1)\epsilon}{2}}\to\frac{4}{3}-\frac{4}{3n+3} \;(\text{as }\epsilon\to 0).\]
This completes the proof of Lemma \ref{lemma:monotone-tight-example}. 
\end{proof}

\subsection{Symmetric submodular functions}\label{sec:sym}
In this section, we bound the approximation factor of Algorithm 1 for symmetric submodular $k$-partitioning. The following is the main theorem of this section. 


\begin{theorem} \label{Thm: 2-approx sym sm k-cut algo}
The approximation factor of Algorithm 1 for non-negative symmetric submodular $k$-partitioning is $2(1-1/n)$, where $n$ is the size of the ground set. 
\end{theorem}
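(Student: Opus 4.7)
The plan is to adapt the proof of Lemma~\ref{lemma:monotone 4/3} by replacing the monotonicity-based manipulations with symmetry-based ones. Lemma~\ref{claim:exact-k} reduces us to the case $|\mcp_j|\ne k$ for every $j$; let $\mcp_{i-1}$, $\mcp_i$, $S$, and $\mcp'=\{B_1,\ldots,B_{|\mcp'|}\}$ with $f(B_1)\le\cdots\le f(B_{|\mcp'|})$ be as in Algorithm~1. Write $A:=k-|\mcp_{i-1}|\ge 1$, $B:=|\mcp_i|-k\ge 1$, $M:=A+B$, so that $|\mcp'|=M+1$. I will assume without loss of generality that $f(\emptyset)=0$; then $f(V)=0$ by symmetry.

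The engine of the proof is a pair of symmetry-plus-subadditivity inequalities. First, $f(S)\le\tfrac{1}{2}f(\mcp_{i-1})$: writing $V\setminus S$ as the disjoint union of the parts of $\mcp_{i-1}\setminus\{S\}$, iterated subadditivity (which holds for non-negative submodular functions) gives $f(V\setminus S)\le f(\mcp_{i-1})-f(S)$, and combining with the symmetry identity $f(S)=f(V\setminus S)$ yields $2f(S)\le f(\mcp_{i-1})$. Second, letting $T:=\bigcup_{j=A+1}^{M+1}B_j$ denote the merged part that Algorithm~1 creates inside $S$, we have $V\setminus T=(V\setminus S)\cup B_1\cup\cdots\cup B_A$ as a disjoint union, so subadditivity followed by symmetry gives $f(T)\le f(S)+\sum_{j=1}^{A}f(B_j)$.

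Substituting the second bound into the Algorithm~1 identity $f(\mcp)=f(\mcp_{i-1})-f(S)+\sum_{j=1}^{A}f(B_j)+f(T)$ absorbs the $-f(S)$ term and yields $f(\mcp)\le f(\mcp_{i-1})+2\sum_{j=1}^{A}f(B_j)$. The sorted choice of $B_1,\ldots,B_A$ gives the averaging bound $\sum_{j=1}^{A}f(B_j)\le\tfrac{A}{M+1}f(\mcp')$, and rewriting $f(\mcp')=f(\mcp_i)-f(\mcp_{i-1})+f(S)$ together with the first symmetry bound $f(S)\le f(\mcp_{i-1})/2$ yields
\[f(\mcp)\le\frac{B+1}{M+1}\,f(\mcp_{i-1})+\frac{2A}{M+1}\,f(\mcp_i).\]
Applying the lower bound $f(\mcp_i)\le\tfrac{M}{A}f(\mcp^\ast)-\tfrac{B}{A}f(\mcp_{i-1})$ from Lemma~\ref{lemma: f long ineq}(i) and simplifying produces
\[f(\mcp)\le\frac{1-B}{M+1}\,f(\mcp_{i-1})+\frac{2M}{M+1}\,f(\mcp^\ast).\]
Since $B\ge 1$ and $f(\mcp_{i-1})\ge 0$, the first term is non-positive, hence $f(\mcp)\le\tfrac{2M}{M+1}f(\mcp^\ast)\le\bigl(2-\tfrac{2}{n}\bigr)f(\mcp^\ast)$, where the last inequality uses $M=|\mcp_i|-|\mcp_{i-1}|\le n-1$.

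The step I expect to be the main conceptual hurdle is the inequality $f(S)\le\tfrac{1}{2}f(\mcp_{i-1})$. It is precisely this factor-of-two gain over the monotone analysis---where only $f(S)\le f(\mcp_{i-1})$ is available---that causes the coefficient of $f(\mcp_{i-1})$ to collapse to the non-positive value $1-B$ in the final inequality, thereby eliminating the delicate case analysis on the sign of $A-B+1$ that was needed in the monotone proof.
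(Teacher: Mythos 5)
Your proof is correct and takes essentially the same approach as the paper: the key symmetry-plus-subadditivity inequality giving $f(T)\le f(S)+\sum_{j\le A} f(B_j)$, the bound $2f(S)\le f(\mcp_{i-1})$, and the invocation of Lemma~\ref{lemma: f long ineq}(i) all appear in the paper's argument, leading to the identical intermediate bound $f(\mcp)\le\frac{B+1}{M+1}f(\mcp_{i-1})+\frac{2A}{M+1}f(\mcp_i)$. The only difference is cosmetic --- you substitute the Lemma~\ref{lemma: f long ineq}(i) bound for $f(\mcp_i)$ first and then drop the resulting non-positive term $\frac{1-B}{M+1}f(\mcp_{i-1})$, whereas the paper first replaces $B+1$ by $2B$ (using $B\ge 1$) and then applies the lemma; the two manipulations are algebraically equivalent.
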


The asymptotic approximation factor of $2$ achieved by Algorithm 1 is the best possible for non-negative symmetric submodular $k$-partition: for every constant $\epsilon>0$, there does not exist an algorithm that achieves a $(2-\epsilon)$-approximation using polynomial number of function evaluation queries \cite{San21}.  
We will remark on the tightness of the approximation factor for Algorithm 1 after proving the theorem. 
The proof of Theorem  \ref{Thm: 2-approx sym sm k-cut algo} follows from Lemma \ref{claim:exact-k} and Lemma \ref{claim:not-exact-k} shown below.
\begin{lemma}\label{claim:not-exact-k}
Let $k\geq 2$ be an integer, $f:2^V\to\R$ be a non-negative symmetric submodular function on a ground set $V$ of size $n$, $\mcp^\ast$ be a $k$-partition of $V$ that minimizes $f(\mcp^\ast)$, and $\mcp_1, \mcp_2, \ldots, \mcp_r$ be a principal partition sequence of the submodular function $f$ satisfying the conditions of Theorem \ref{Thm: PPS}. 
Suppose $|\mcp_j|\neq k$ for all $j\in[r]$.
Then the partition $\mcp$ returned by Algorithm 1
satisfies 
\[f(\mcp)\leq 2\left(1-\frac{1}{n}\right)f(\mcp^\ast).\]
\end{lemma}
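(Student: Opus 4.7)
The plan is to expand the cost of the returned partition as
\[
f(\mcp) = f(\mcp_{i-1}) - f(S) + \sum_{j=1}^{\alpha} f(B_j) + f(T),
\]
where $\alpha := k - |\mcp_{i-1}|$, $m := |\mcp'| = |\mcp_i| - |\mcp_{i-1}| + 1$, $T' := \bigcup_{j=1}^{\alpha} B_j$, and $T := \bigcup_{j=\alpha+1}^{m} B_j$. I would then exploit symmetry at two distinct points and finally combine with Lemma~\ref{lemma: f long ineq}.

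First, I would use symmetry to upper bound $f(T)$. Since $V \setminus T = (V \setminus S) \cup T'$ is a disjoint union, subadditivity of the non-negative submodular function $f$ gives $f(V\setminus T) \le f(V\setminus S) + f(T')$, and symmetry (applied on both sides) turns this into $f(T) \le f(S) + f(T')$. Combining with the subadditive bound $f(T') \le \sum_{j=1}^{\alpha} f(B_j)$ and the averaging bound $\sum_{j=1}^{\alpha} f(B_j) \le (\alpha/m) f(\mcp')$ (which follows from the sorted order $f(B_1) \le \cdots \le f(B_m)$) yields $f(\mcp) \le f(\mcp_{i-1}) + (2\alpha/m) f(\mcp')$.

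Second, when $|\mcp_{i-1}| \ge 2$, I would upper bound $f(S)$ using symmetry in a different way: $f(S) = f(V\setminus S)$, and subadditivity applied to the partition $\mcp_{i-1} \setminus \{S\}$ of $V \setminus S$ gives $f(V\setminus S) \le f(\mcp_{i-1}) - f(S)$, so $f(S) \le f(\mcp_{i-1})/2$. Expanding $f(\mcp') = f(\mcp_i) - f(\mcp_{i-1}) + f(S)$ and substituting produces the clean bound
\[
f(\mcp) \le \left(1 - \frac{\alpha}{m}\right) f(\mcp_{i-1}) + \frac{2\alpha}{m} f(\mcp_i).
\]
Writing $\beta := |\mcp_i| - k \ge 1$, I would multiply Lemma~\ref{lemma: f long ineq}(i) by $2/m$ to get $(2\alpha/m) f(\mcp_i) \le (2(m-1)/m) f(\mcp^\ast) - (2\beta/m) f(\mcp_{i-1})$; plugging this in and using $\alpha + \beta = m - 1$ together with $\beta \ge 1$, the $f(\mcp_{i-1})$ coefficient becomes non-positive, leaving $f(\mcp) \le 2(1 - 1/m) f(\mcp^\ast)$. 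Since $\mcp'$ partitions $S \subseteq V$ into $m$ parts, $m \le n$, giving $f(\mcp) \le 2(1 - 1/n) f(\mcp^\ast)$.

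Finally, the case $|\mcp_{i-1}| = 1$ needs separate handling because $S = V$ makes the half-bound on $f(S)$ vacuous. However, $V \setminus S = \emptyset$ implies $V \setminus T = T'$ exactly, so by symmetry $f(T) = f(T')$; combined with the cancellation $f(\mcp_{i-1}) - f(S) = 0$, this yields the sharper $f(\mcp) \le 2\sum_{j=1}^{\alpha} f(B_j) \le (2\alpha/m) f(\mcp_i)$. Applying Lemma~\ref{lemma: f long ineq}(i) (and discarding the non-negative $f(\mcp_{i-1}) = f(V)$ term) gives $f(\mcp_i) \le ((m-1)/\alpha) f(\mcp^\ast)$, reproducing the bound $2(1 - 1/m) f(\mcp^\ast) \le 2(1 - 1/n) f(\mcp^\ast)$. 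The main obstacle I anticipate is identifying the right pair of symmetry applications and averaging so that, when combined with Lemma~\ref{lemma: f long ineq}(i), the $f(\mcp_{i-1})$ contributions exactly cancel; the split at $|\mcp_{i-1}| = 1$ is then forced by the failure of the half-bound on $f(S)$, but a sharper identity on $f(T)$ compensates.
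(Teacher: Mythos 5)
Your proof is correct and follows essentially the same strategy as the paper: the symmetry inequality $f(T)\le f(S)+f(S\setminus T)$ for $T\subseteq S$ (your disjoint-union/subadditivity phrasing is equivalent), the averaging bound on the $\alpha$ cheapest parts, the bound $2f(S)\le f(\mcp_{i-1})$ via symmetry and subadditivity over $\mcp_{i-1}\setminus\{S\}$, and finally Lemma~\ref{lemma: f long ineq}(i) together with the observation that $\beta=|\mcp_i|-k\ge 1$ kills the residual $f(\mcp_{i-1})$ contribution; the paper organizes this last step by bounding $\beta+1\le 2\beta$ before invoking the lemma rather than after, but the algebra is the same. One small difference worth noting: you explicitly isolate the case $|\mcp_{i-1}|=1$ (where $S=V$ and the half-bound on $f(S)$ degenerates) and close it via $f(T)=f(T')$ exactly, whereas the paper applies $2f(S)\le f(\mcp_{i-1})$ uniformly; your case split is the more careful route, since for a symmetric nonnegative submodular $f$ with $f(V)=f(\emptyset)>0$ the uniform inequality would actually fail at $S=V$ even though the final conclusion still holds.
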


\begin{proof}
Let $\mcp_{i-1},\mcp_i$ be the partitions such that $|\mcp_{i-1}|<k<|\mcp_i|$. Let $S$ and $\mcp'=\{B_1,B_2,\ldots,B_{|\mcp'|}\}$ be as in Algorithm 1. 

Firstly, we note that for every $T\subseteq S$, symmetry, submodularity, and non-negativity of $f$ together imply that
\begin{align}
    f(T)&=f(V-T)=f((V-S)\cup(S-T))
    \leq f(V-S)+f(S-T)=f(S)+f(S-T). \label{Eqn: sym sm}
\end{align}
Secondly, by our choice of $B_1,B_2,\ldots,B_{k-|\mcp_{i-1}|}$, we know that
\begin{align}
    \sum_{j=1}^{k-|\mcp_{i-1}|}f(B_j)\leq\frac{k-|\mcp_{i-1}|}{|\mcp'|}f(\mcp').\label{Eqn: sum of f(Bj)}
\end{align}
Since $\cup_{j=k-|\mcp_{i-1}|+1}^{|\mcp'|}B_j\subseteq S$, we have that 
\begin{align*}
    f\left(\bigcup_{j=k-|\mcp_{i-1}|+1}^{|\mcp'|}B_j\right)
    &\leq f(S)+f\left(\bigcup_{j=1}^{k-|\mcp_{i-1}|}B_j\right)\quad\text{(by inequality \eqref{Eqn: sym sm})}
    \\&\leq f(S)+\sum_{j=1}^{k-|\mcp_{i-1}|}f(B_j)\quad\text{(by submodularity and non-negativity)}
    \\&\leq f(S)+\frac{k-|\mcp_{i-1}|}{|\mcp'|}f(\mcp').\quad\text{(by inequality }\eqref{Eqn: sum of f(Bj)})
\end{align*}
Therefore, we have
\begin{align*}
    f(\mcp)&=f(\mcp_{i-1})-f(S)+\sum_{j=1}^{k-|\mcp_{i-1}|}f(B_j)+f\left(\bigcup_{j=k-|\mcp_{i-1}|+1}^{|\mcp'|}B_j\right)
    \\&\leq f(\mcp_{i-1})-f(S)+\frac{k-|\mcp_{i-1}|}{|\mcp'|}f(\mcp')+\left(f(S)+\frac{k-|\mcp_{i-1}|}{|\mcp'|}f(\mcp')\right)
    \\&=f(\mcp_{i-1})+2\cdot\frac{k-|\mcp_{i-1}|}{|\mcp'|}f(\mcp')
    \\&= f(\mcp_{i-1})+2\cdot\frac{k-|\mcp_{i-1}|}{|\mcp_i|-|\mcp_{i-1}|+1}(f(\mcp_i)+f(S)-f(\mcp_{i-1})),
\end{align*}
where the last equality follows from the fact that $f(\mcp_{i-1})-f(S)+f(\mcp')=f(\mcp_i)$ and $|\mcp'|=|\mcp_i|-|\mcp_{i-1}|+1$. Rearranging, we get
\begin{align*}
    f(\mcp)
    \leq \frac{f(\mcp_{i-1})}{|\mcp_i|-|\mcp_{i-1}|+1}(|\mcp_i|+|\mcp_{i-1}|-2k+1)&+\frac{f(\mcp_{i})}{|\mcp_i|-|\mcp_{i-1}|+1}(2k-2|\mcp_{i-1}|)\\
    & \quad \quad \quad \quad +\frac{k-|\mcp_{i-1}|}{|\mcp_i|-|\mcp_{i-1}|+1}\cdot 2f(S).
\end{align*}
Now, we observe that $2f(S)=f(S)+f(V-S)\leq f(\mcp_{i-1})$ by symmetry, submodularity, and non-negativity. Consequently, we have that 
\begin{align*}
    f(\mcp)&\leq \frac{f(\mcp_{i-1})}{|\mcp_i|-|\mcp_{i-1}|+1}(|\mcp_i|+|\mcp_{i-1}|-2k+1)+\frac{f(\mcp_{i})}{|\mcp_i|-|\mcp_{i-1}|+1}(2k-2|\mcp_{i-1}|)\\
    & \quad \quad \quad \quad \quad \quad \quad \quad +\frac{k-|\mcp_{i-1}|}{|\mcp_i|-|\mcp_{i-1}|+1}\cdot f(\mcp_{i-1})
    \\&=\left(\frac{|\mcp_i|-|\mcp_{i-1}|}{|\mcp_i|-|\mcp_{i-1}|+1}\right)\left(\frac{f(\mcp_{i-1})}{|\mcp_i|-|\mcp_{i-1}|}(|\mcp_i|-k+1)+\frac{f(\mcp_{i})}{|\mcp_i|-|\mcp_{i-1}|}(2k-2|\mcp_{i-1}|)\right)
    \\&\le\left(\frac{|\mcp_i|-|\mcp_{i-1}|}{|\mcp_i|-|\mcp_{i-1}|+1}\right)\left(\frac{f(\mcp_{i-1})}{|\mcp_i|-|\mcp_{i-1}|}(2|\mcp_i|-2k)+\frac{f(\mcp_{i})}{|\mcp_i|-|\mcp_{i-1}|}(2k-2|\mcp_{i-1}|)\right) 
    \\&\leq \left(\frac{|\mcp_i|-|\mcp_{i-1}|}{|\mcp_i|-|\mcp_{i-1}|+1}\right)\cdot 2f(\mcp^\ast)\quad(\text{by Lemma \ref{lemma: f long ineq}(i)})
    \\&= \left(1-\frac{1}{|\mcp_i|-|\mcp_{i-1}|+1}\right)\cdot 2f(\mcp^\ast)
    \\&\leq 2\left(1-\frac{1}{n}\right)f(\mcp^\ast),
\end{align*}
where the second inequality is because $1\le |\mcp_i|-k$ and the last inequality is because $|\mcp_i|-|\mcp_{i-1}|\leq n-1$.
\end{proof}

\begin{remark}\label{remark:lower-bound-example-for-sym-sm-fns}
In contrast to the greedy splitting and the Gomory-Hu tree based approaches, 
the principal partition sequence approach given in Algorithm 1 
does not achieve an approximation factor of $2-2/k$ for non-negative symmetric submodular functions: For the special case of graph $k$-cut, Proposition 5.2 of \cite{RS07} provides a family of graph cut functions on which Algorithm 1 returns solutions whose approximation factor is arbitrarily close to $2$. 
\end{remark}




\subsection{Posimodular submodular functions}\label{sec:pm}
In this section, we bound the approximation factor of Algorithm 1 for posimodular submodular $k$-partitioning. The following is the main theorem of this section. 

\begin{theorem} \label{Thm: 2-approx pm sm k-cut algo}
The approximation factor of Algorithm 1 for non-negative posimodular submodular $k$-partitioning is $2(1-\frac{1}{n+1})$, where $n$ is the size of the ground set. 
\end{theorem}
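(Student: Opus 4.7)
My plan is to mirror the symmetric-case analysis of Lemma~\ref{claim:not-exact-k} but to replace the single symmetry-based ingredient $2f(S)\le f(\mcp_{i-1})$ by two weaker ingredients and then combine them via a convex combination. By Lemma~\ref{claim:exact-k}, I may assume no partition in the principal partition sequence has exactly $k$ parts; let $\mcp_{i-1},\mcp_i$ be the bracketing partitions with $|\mcp_{i-1}|<k<|\mcp_i|$, and let $S$ and $\mcp'=\{B_1,\ldots,B_{|\mcp'|}\}$ be as in Algorithm~1. Write $\alpha:=k-|\mcp_{i-1}|$, $\beta:=|\mcp_i|-k$, $L:=\alpha+\beta$, and $M:=L+1=|\mcp'|$.

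The first step is to observe that posimodularity applied with $A=S$ and $B=S-T$ gives $f(T)\le f(S)+f(S-T)$ for every $T\subseteq S$, which is exactly the structural inequality used inside the symmetric-case proof (there obtained via symmetry combined with subadditivity). Plugging this into that derivation verbatim yields
\[
f(\mcp)\le f(\mcp_{i-1})+\frac{2\alpha}{M}\bigl(f(\mcp_i)+f(S)-f(\mcp_{i-1})\bigr).
\]
For posimodular $f$ we only have $f(S)\le f(\mcp_{i-1})$ (since $S$ is a part of $\mcp_{i-1}$ and $f$ is non-negative), not the stronger symmetric inequality $2f(S)\le f(\mcp_{i-1})$. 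Substituting the weaker bound gives the first basic estimate
\[
\text{(A)}\qquad f(\mcp)\le f(\mcp_{i-1})+\frac{2\alpha}{M}f(\mcp_i).
\]
I will also use that $\mcp$ is a coarsening of $\mcp_i$ (obtained by merging $B_{\alpha+1},\ldots,B_{|\mcp'|}$), so that subadditivity of non-negative submodular functions on disjoint sets yields the second basic estimate
\[
\text{(B)}\qquad f(\mcp)\le f(\mcp_i).
\]

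The main obstacle is that (A) alone places too much weight on $f(\mcp_{i-1})$ for a direct application of Lemma~\ref{lemma: f long ineq}(i) to produce a $2$-approximation. My fix is to take the convex combination $f(\mcp)\le \lambda\cdot(\text{RHS of (A)})+(1-\lambda)\cdot(\text{RHS of (B)})$ and choose $\lambda\in[0,1]$ so that the resulting coefficients of $f(\mcp_{i-1})$ and $f(\mcp_i)$ are in the ratio $\beta:\alpha$ needed by Lemma~\ref{lemma: f long ineq}(i). Solving a short linear system gives $\lambda=\beta M/\bigl(L(L+1)-2\alpha\beta\bigr)$, which lies in $[0,1]$ precisely when $\beta\le\alpha+1$; in that regime Lemma~\ref{lemma: f long ineq}(i) produces
\[
f(\mcp)\le \frac{L(L+1)}{L(L+1)-2\alpha\beta}\,f(\mcp^\ast),
\]
and using $\alpha\beta\le L^2/4$ (AM--GM for fixed $L$) together with $L\le n-1$ bounds this by $2-2/(L+2)\le 2-2/(n+1)$. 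In the complementary regime $\beta>\alpha+1$, substituting Lemma~\ref{lemma: f long ineq}(i) into (A) alone makes the coefficient of $f(\mcp_{i-1})$ negative, so discarding it via $f(\mcp_{i-1})\ge 0$ already yields $f(\mcp)\le 2(1-1/M)f(\mcp^\ast)\le 2(1-1/n)f(\mcp^\ast)$, which is tighter than the theorem requires. Combining the two regimes gives the claimed $2(1-1/(n+1))$-approximation, with the extremal case being $\alpha=\beta=(n-1)/2$ when $n$ is odd, which I expect will also guide the accompanying tight-example construction.
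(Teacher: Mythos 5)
Your proposal is correct and uses the same key ingredients as the paper's proof (the preliminary posimodular bound, $f(S)\le f(\mcp_{i-1})$, $f(\mcp)\le f(\mcp_i)$, Lemma~\ref{lemma: f long ineq}(i), and AM--GM with $L\le n-1$), arriving at the identical factor $2-\tfrac{2}{n+1}$. The only difference is organizational: you pick an optimal convex combination of the two basic bounds to match the $\beta:\alpha$ ratio in Lemma~\ref{lemma: f long ineq}(i) (with case split $\lambda\le1$ i.e.\ $\beta\le\alpha+1$), whereas the paper introduces $c:=f(\mcp_{i-1})/f(\mcp^\ast)\in[0,1]$ and analyzes the max-min of two linear-in-$c$ bounds (with case split on the sign of $A-B+1$); these are essentially dual views of the same tradeoff.
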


The asymptotic approximation factor of $2$ achieved by Algorithm 1 is the best possible for non-negative posimodular submodular $k$-partition: for every constant $\epsilon>0$, there does not exist an algorithm that achieves a $(2-\epsilon)$-approximation using polynomial number of function evaluation queries \cite{San21}.  
Lemmas \ref{claim:exact-k} and \ref{lemma:posimodular} complete the proof of Theorem \ref{Thm: 2-approx pm sm k-cut algo}. 
The proof of Theorem  \ref{Thm: 2-approx pm sm k-cut algo} follows from Lemma \ref{claim:exact-k} and Lemma \ref{lemma:posimodular} shown below.

\begin{lemma}\label{lemma:posimodular}
Let $k\geq 2$ be an integer, $f:2^V\to\R_{\ge 0}$ be a non-negative posimodular submodular function on a ground set $V$ of size $n$, $\mcp^\ast$ be a $k$-partition of $V$ that minimizes $f(\mcp^\ast)$, and $\mcp_1, \mcp_2, \ldots, \mcp_r$ be a principal partition sequence of the submodular function $f$ satisfying the conditions of Theorem \ref{Thm: PPS}. 
Suppose $|\mcp_j|\neq k$ for all $j\in[r]$.
Then, the partition $\mcp$ returned by Algorithm 1
satisfies 
\[f(\mcp)\leq 2\left(1-\frac{1}{n+1}\right)f(\mcp^\ast).\]
\end{lemma}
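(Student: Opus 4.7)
My plan is to mimic the symmetric-case argument (Lemma~\ref{claim:not-exact-k}) while replacing the identity $f(T) = f(V-T)$ with posimodular substitutes, and then to absorb the resulting slack by invoking a complementary second upper bound. Throughout, write $A := k - |\mcp_{i-1}|$, $B := |\mcp_i|-k$, and $N := A + B$.

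First, I would derive two consequences of posimodularity that play the role of symmetry. Applying posimodularity to $(V-T, S)$ for $T \subseteq S$ and then bounding $f(V-T) \leq f(V-S) + f(S-T)$ via submodularity and non-negativity yields the posimodular analogue $f(T) \leq f(S) + f(S-T)$ of $f(T)=f(V-T)$. Applying posimodularity to $(V-S, V)$ gives $f(S) \leq f(V) + f(V-S)$, and combining with $f(V-S) \leq f(\mcp_{i-1}) - f(S)$ (submodularity and non-negativity, since $V-S$ is the disjoint union of the parts of $\mcp_{i-1}$ other than $S$) yields the posimodular analogue $2f(S) \leq f(V) + f(\mcp_{i-1})$ of the symmetric inequality $2f(S) \leq f(\mcp_{i-1})$.

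Using these two inequalities and repeating the calculation of Lemma~\ref{claim:not-exact-k} with $T := \bigcup_{j > A} B_j \subseteq S$, I obtain
\[
f(\mcp) \;\leq\; \frac{(B+1)\, f(\mcp_{i-1}) + 2A\, f(\mcp_i) + A\, f(V)}{N+1}.
\]
The extra $A\, f(V)$ term is new compared to the symmetric case and is the source of the difficulty. In parallel, since $\mcp$ is obtained from $\mcp_i$ by merging $B+1$ of its parts into one, submodularity and non-negativity give the complementary bound $f(\mcp) \leq f(\mcp_i)$. To finish, I combine the two bounds using Lemma~\ref{lemma: f long ineq} (which gives $f(\mcp_{i-1}) \leq f(\mcp^*)$ and $B\, f(\mcp_{i-1}) + A\, f(\mcp_i) \leq N\, f(\mcp^*)$) and $f(V) \leq f(\mcp_{i-1})$ (again by submodularity and non-negativity), via a case analysis on $A$ versus $B$. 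When $A < B$, substituting $f(V) \leq f(\mcp_{i-1})$ into the displayed bound and applying Lemma~\ref{lemma: f long ineq}(i) leaves the residual coefficient of $f(\mcp_{i-1})$ equal to $(A-B+1)/(N+1) \leq 0$, so $f(\mcp) \leq \tfrac{2N}{N+1} f(\mcp^*)$. When $A > B$, the complementary bound together with Lemma~\ref{lemma: f long ineq}(i) gives $f(\mcp) \leq (N/A) f(\mcp^*) \leq (2-1/(B+1)) f(\mcp^*)$, and $2B + 1 \leq N \leq n-1$ forces $1/(B+1) \geq 2/n$. When $A = B$ (which forces $N$ even), taking the minimum of the two bounds and optimizing over the adversary's choice of $(f(\mcp_{i-1}), f(\mcp_i), f(V))$ subject to the above constraints pins the worst case at $f(\mcp_{i-1}) = f(V) = 2 f(\mcp^*)/(N+2)$, where both bounds take the common value $2 - 2/(N+2)$. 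In every case, using $N \leq n-1$ yields $f(\mcp) \leq 2(1 - 1/(n+1)) f(\mcp^*)$.

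The main obstacle is the balanced case $A = B$. The extra $A\, f(V)$ term means that relying on the posimodular bound alone would yield only a constant close to $3$ in this regime; the fix is the two-sided interplay with the complementary bound $f(\mcp) \leq f(\mcp_i)$. Increasing $f(\mcp_{i-1})$ to absorb the $f(V)$ slack in the first bound simultaneously tightens the second via Lemma~\ref{lemma: f long ineq}(i), and it is exactly this trade-off that forces the approximation factor down to $2(1 - 1/(n+1))$.
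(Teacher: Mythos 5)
Your proof is correct and follows essentially the same strategy as the paper's: both derive the same two upper bounds on $f(\mcp)$ (one from posimodularity applied inside $S$, one from $f(\mcp)\le f(\mcp_i)$ combined with Lemma~\ref{lemma: f long ineq}(i)) and then do a case analysis on the sign of $A-B$, solving for the intersection point of the two bounds in the balanced regime to land at $2-\tfrac{2}{N+2}$. The paper handles $A\ge B$ as one unified optimization over $c=f(\mcp_{i-1})/f(\mcp^\ast)$, whereas you split off $A>B$ and dispose of it cleanly with only the bound $f(\mcp)\le f(\mcp_i)\le (N/A)f(\mcp^\ast)\le (2-\tfrac{1}{B+1})f(\mcp^\ast)$, which is a tidier argument for that regime; your extra $f(V)$ refinement $2f(S)\le f(V)+f(\mcp_{i-1})$ is valid but the adversary sets $f(V)=f(\mcp_{i-1})$ in the worst case, so it buys nothing over the paper's simpler $f(S)\le f(\mcp_{i-1})$.
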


\begin{proof}
Let $\mcp_{i-1},\mcp_i$ be the partitions such that $|\mcp_{i-1}|<k<|\mcp_i|$. Let $S$ and $\mcp'=\{B_1,B_2,\ldots,B_{|\mcp'|}\}$ be as in Algorithm 1. 

Firstly, we note that for every $T\subseteq S$, non-negativity and posimodularity implies that
\begin{align}
f(T)\le f(T) + f(\emptyset) \le f(S) + f(S-T). \label{Eqn: posi sm}
\end{align}
Secondly, by our choice of $B_1,B_2,\ldots,B_{k-|\mcp_{i-1}|}$, we know that
\begin{align}
    \sum_{j=1}^{k-|\mcp_{i-1}|}f(B_j)\leq\frac{k-|\mcp_{i-1}|}{|\mcp'|}f(\mcp').\label{Eqn: sum of f(Bj)-posi}
\end{align}
Since $\cup_{j=k-|\mcp_{i-1}|+1}^{|\mcp'|}B_j\subseteq S$, we have that 
\begin{align*}
    f\left(\bigcup_{j=k-|\mcp_{i-1}|+1}^{|\mcp'|}B_j\right)
    &\leq f(S)+f\left(\bigcup_{j=1}^{k-|\mcp_{i-1}|}B_j\right)\quad\text{(by inequality \eqref{Eqn: posi sm})}
    \\&\leq f(S)+\sum_{j=1}^{k-|\mcp_{i-1}|}f(B_j)\quad\text{(by submodularity and non-negativity)}
    \\&\leq f(S)+\frac{k-|\mcp_{i-1}|}{|\mcp'|}f(\mcp').\quad\text{(by inequality }\eqref{Eqn: sum of f(Bj)-posi})
\end{align*}
Therefore, we have 
\begin{align*}
    f(\mcp)
    &=f(\mcp_{i-1})-f(S)+\sum_{j=1}^{k-|\mcp_{i-1}|}f(B_j)+f\left(\bigcup_{j=k-|\mcp_{i-1}|+1}^{|\mcp'|}B_j\right)
    \\&\leq f(\mcp_{i-1})-f(S)+2\sum_{j=1}^{k-|\mcp_{i-1}|}f(B_j)+f(S)
    \\&= f(\mcp_{i-1})+2\cdot\sum_{j=1}^{k-|\mcp_{i-1}|}f(B_j)
    \\&\leq f(\mcp_{i-1})+2\cdot\frac{k-|\mcp_{i-1}|}{|\mcp'|}f(\mcp')
    \\&=f(\mcp_{i-1})+\frac{2k-2|\mcp_{i-1}|}{|\mcp_i|-|\mcp_{i-1}|+1}(f(\mcp_i)+f(S)-f(\mcp_{i-1})),
\end{align*}
where the last equality follows from the fact that $f(\mcp_{i-1})-f(S)+f(\mcp')=f(\mcp_i)$ and $|\mcp'|=|\mcp_i|-|\mcp_{i-1}|+1$. 
Applying the fact that $f(S)\leq f(\mcp_{i-1})$, we have
\begin{align}
    f(\mcp)
    &\le \left(\frac{|\mcp_i|-|\mcp_{i-1}|}{|\mcp_i|-|\mcp_{i-1}|+1}\right)\left(\frac{f(\mcp_{i-1})}{|\mcp_i|-|\mcp_{i-1}|}(|\mcp_i|-|\mcp_{i-1}|+1)\right)\notag\\
    & \quad \quad \quad \quad \quad \quad +\left(\frac{|\mcp_i|-|\mcp_{i-1}|}{|\mcp_i|-|\mcp_{i-1}|+1}\right)\left(\frac{f(\mcp_{i})}{|\mcp_i|-|\mcp_{i-1}|}(2k-2|\mcp_{i-1}|)\right) \notag
    \\&=\left(\frac{|\mcp_i|-|\mcp_{i-1}|}{|\mcp_i|-|\mcp_{i-1}|+1}\right)\left(\frac{f(\mcp_{i-1})}{|\mcp_i|-|\mcp_{i-1}|}((2|\mcp_i|-2k)+(2k-|\mcp_{i-1}|-|\mcp_i|+1))\right)\notag\\
    & \quad \quad \quad \quad \quad \quad +\left(\frac{|\mcp_i|-|\mcp_{i-1}|}{|\mcp_i|-|\mcp_{i-1}|+1}\right)\left(\frac{2f(\mcp_{i})}{|\mcp_i|-|\mcp_{i-1}|}(k-|\mcp_{i-1}|)\right) \notag
    \\&\leq \left(\frac{|\mcp_i|-|\mcp_{i-1}|}{|\mcp_i|-|\mcp_{i-1}|+1}\right)\left(2f(\mcp^\ast)+\frac{f(\mcp_{i-1})}{|\mcp_i|-|\mcp_{i-1}|}(2k-|\mcp_{i-1}|-|\mcp_i|+1)\right), \label{eqn: posimodular - 1}
\end{align}
where the last inequality is by Lemma \ref{lemma: f long ineq}(i). Inequality \eqref{eqn: posimodular - 1} implies that 
\begin{align*}
    f(\mcp)&\leq \left(\frac{|\mcp_i|-|\mcp_{i-1}|}{|\mcp_i|-|\mcp_{i-1}|+1}\right)\left(2f(\mcp^\ast)+f(\mcp_{i-1})+\frac{f(\mcp_{i-1})}{|\mcp_i|-|\mcp_{i-1}|}(2k-2|\mcp_i|+1)\right)
    \\&\leq \left(1-\frac{1}{|\mcp_i|-|\mcp_{i-1}|+1}\right)\left( 2f(\mcp^\ast)+f(\mcp_{i-1})\right)\quad(\text{since }k<|\mcp_i|)
    \\&\leq 3f(\mcp^\ast),
\end{align*}
where the last inequality is because $f(\mcp_{i-1})\leq f(\mcp^\ast)$ by Lemma \ref{lemma: f long ineq}(ii). 
The above analysis already shows that the approximation factor is at most $3$. We tighten the factor now. 
As a consequence of the above inequality, we may assume that $f(\mcp^\ast)\neq 0$ because if $f(\mcp^\ast)=0$, then the returned $k$-partition $\mcp$ also satisfies $f(\mcp)=0$ and thus, is optimal.
Let $c:=f(\mcp_{i-1})/f(\mcp^\ast)$. By Lemma \ref{lemma: f long ineq}(ii), we have that $f(\mcp_{i-1})\le f(\mcp^{\ast})$ and hence, $c\in [0, 1]$. For convenience, we define $A:=k-|\mcp_{i-1}|$ and $B:=|\mcp_i|-k$ and note that $A, B\ge 1$.
Using these notations, we may rewrite inequality \eqref{eqn: posimodular - 1} as
\begin{align}
    f(\mcp)&\leq \left(\frac{A+B}{A+B+1}\right)\left(2f(\mcp^\ast)+\frac{A-B+1}{A+B}f(\mcp_{i-1})\right) \notag\\
    &= \left(\frac{A+B}{A+B+1}\right)\left(2+\frac{A-B+1}{A+B}\cdot c\right)f(\mcp^\ast).\label{eqn: posimodular - 2}
\end{align}
By Lemma \ref{lemma: f long ineq}(i), we have
\[f(\mcp^\ast) \ge \left(\frac{B}{A+B}\right)f(\mcp_{i-1})+\left(\frac{A}{A+B}\right)f(\mcp_i) = \left(\frac{B}{A+B}\right)cf(\mcp^\ast)+\left(\frac{A}{A+B}\right)f(\mcp_i).\]
Rearranging, we have
\[f(\mcp_i)\leq \left(1-\frac{B}{A+B}\cdot c\right)\left(\frac{A+B}{A}\right)f(\mcp^\ast) = \left(\frac{A+B}{A}-\frac{B}{A}\cdot c\right) f(\mcp^\ast).\]
Since $\mcp$ is obtained by coarsening $\mcp_i$, we have $f(\mcp)\leq f(\mcp_i)$ by submodularity and non-negativity of $f$. Hence, 
\begin{align}
    f(\mcp)\leq 
    \left(\frac{A+B}{A}-\frac{B}{A}\cdot c\right) f(\mcp^\ast).\label{eqn: posimodular - 3}
\end{align}
Combining inequalities \eqref{eqn: posimodular - 2} and \eqref{eqn: posimodular - 3}, we have
\begin{align}
    \frac{f(\mcp)}{f(\mcp^\ast)} \leq \max_{c\in[0,1]}\min\left\{\left(\frac{A+B}{A+B+1}\right)\left(2+\frac{A-B+1}{A+B}\cdot c\right),\; \frac{A+B}{A}-\frac{B}{A}\cdot c\right\}. \label{eqn:min-of-two-terms-posimodular}
\end{align}
Thus, in order to upper bound the approximation factor, it suffices to upper bound the right hand side of inequality (\ref{eqn:min-of-two-terms-posimodular}). 
Both terms $\left(\frac{A+B}{A+B+1}\right)\left(2+\frac{A-B+1}{A+B}\cdot c\right)$ and $\frac{A+B}{A}-\frac{B}{A}\cdot c$ are linear in $c$, and the latter is decreasing as a function of $c$. Next, we consider two cases: $A-B+1\leq 0$ and $A-B+1>0$.

Suppose $A-B+1\leq 0$, then the term $\left(\frac{A+B}{A+B+1}\right)\left(2+\frac{A-B+1}{A+B}\cdot c\right)$ is linear and non-increasing as a function of $c$. The maximum $$\max_{c\in[0,1]}\min\left\{\left(\frac{A+B}{A+B+1}\right)\left(2+\frac{A-B+1}{A+B}\cdot c\right),\; \frac{A+B}{A}-\frac{B}{A}\cdot c\right\}$$ is achieved at $c=0$. Thus, we have
\begin{align}
    \frac{f(\mcp)}{f(\mcp^\ast)}
    &\leq \min\left\{\frac{A+B}{A+B+1}\cdot 2,\; \frac{A+B}{A}\right\}\notag\\
    &\leq \frac{A+B}{A+B+1}\cdot 2 \notag\\
    &= 2\left(1-\frac{1}{A+B+1}\right)\notag\\
    &\leq 2\left(1-\frac{1}{n}\right), \label{eqn: posimodular - 4}
\end{align}
where the last inequality follows from the fact that $A+B=|\mcp_i|-|\mcp_{i-1}|\leq n-1$.

Now, we consider the case $A-B+1>0$. The term $\left(\frac{A+B}{A+B+1}\right)\left(2+\frac{A-B+1}{A+B}\cdot c\right)$ is linear and increasing as a function of $c$, and thus the maximum $$\max_{c\in\R}\min\left\{\left(\frac{A+B}{A+B+1}\right)\left(2+\frac{A-B+1}{A+B}\cdot c\right),\; \frac{A+B}{A}-\frac{B}{A}\cdot c\right\}$$ is achieved when the two terms are equal. Setting $\left(\frac{A+B}{A+B+1}\right)\left(2+\frac{A-B+1}{A+B}\cdot c^\ast\right) = \frac{A+B}{A}-\frac{B}{A}\cdot c^\ast$ and solving for $c^\ast$, we get
\begin{align*}
    c^\ast= \frac{\frac{A+B}{A}-2\cdot \frac{A+B}{A+B+1}}{\frac{A-B+1}{A+B+1}+\frac{B}{A}}.
\end{align*}
Plugging $c=c^\ast$ into $\frac{A+B}{A}-\frac{B}{A}\cdot c$, we have
\begin{align*}
    \frac{f(\mcp)}{f(\mcp^\ast)} &\leq\max_{c\in[0,1]}\min\left\{\left(\frac{A+B}{A+B+1}\right)\left(2+\frac{A-B+1}{A+B}\cdot c\right),\; \frac{A+B}{A}-\frac{B}{A}\cdot c\right\}
    \\&\leq \frac{A+B}{A}-\frac{B}{A}\cdot c^\ast
    \\&= 1+\frac{B}{A}(1-c^\ast)
    \\&=1+\frac{B}{A}\left(1-\frac{\frac{A+B}{A}-2\cdot \frac{A+B}{A+B+1}}{\frac{A-B+1}{A+B+1}+\frac{B}{A}}\right)
    \\&=1+\frac{\frac{2B}{A+B+1}}{\frac{A-B+1}{A+B+1}+\frac{B}{A}}
    \\&=1+\frac{2AB}{A^2+A+B^2+B}
    \\&\leq 1+\frac{2AB}{2AB+A+B}\quad(\text{since }A^2+B^2\geq 2AB)
    \\&=2-\frac{A+B}{2AB+A+B}
    \\&= 2-\frac{A+B}{AB+AB+A+B}
    \\&\leq 2-\frac{A+B}{AB+(A^2+B^2)/2+A+B}\quad(\text{since }AB\leq \frac{A^2+B^2}{2})
    \\&=2-\frac{A+B}{(A+B)^2/2+A+B}
    \\&=2-\frac{1}{(A+B)/2+1}
    \\&\leq 2-\frac{2}{n+1}.
\end{align*}
The last inequality above is because $A+B=|\mcp_i|-|\mcp_{i-1}|\leq n-1$.
\end{proof}


\begin{remark}
The approximation factor of Algorithm 1 for non-negative posimodular submodular functions is at least $2-2/(n+1)$.  
We show this for $n=3, k=2$ using the following example: Let $V=\{a, b,c\}$, $k=2$, and $f:2^V\rightarrow \R_{\ge 0}$ be defined by 
\begin{align*}
    &f(\emptyset) = 0, \;f(a) = f(b) = 1,\; f(c) = 1+\epsilon,
    \\& f(\{a,b\}) = 1+\epsilon,\; f(\{b,c\})=f(\{a,c\}) = 2,\; f(\{a,b,c\}) = 1+\epsilon.
\end{align*}
Submodularity and posimodularity of $f$ can be verified by considering all possible subsets. Moreover, the principal partition sequence of this instance is $\{V\},\{\{a\},\{b\},\{c\}\}$. Thus, the algorithm returns the $2$-partition $\{\{a\}, \{b,c\}\}$, while the optimum $2$-partition is $\{\{c\}, \{a,b\}\}$. Thus, the approximation factor approaches $3/2$ as $\epsilon\to 0$. We note that for $n=3$, the approximation factor guaranteed by Theorem \ref{Thm: 2-approx pm sm k-cut algo} is $2-2/(n+1)=3/2$. 
\end{remark}

\section{Lower bound for arbitrary submodular functions} 
\label{subsec: PPS n/k-apx}
In this section, we present an instance of submodular $k$-partition where Algorithm 1 achieves an approximation factor of $\Omega(n/k)$. We emphasize that the submodular function in our instance is not  symmetric/monotone/posimodular. 

Let $V=\{v_0, v_1, \ldots, v_{n-1}\}$ be the ground set. We define a digraph $D=(V,E(D))$ and a hypergraph $H=(V, E(H))$ on the same vertex set $V$ as follows (see Fig.\ref{fig: PPS n/k-apx}):
\begin{align*}
    E(D) &= \{v_0v_i:i\in[n-1]\} \text{ and }
    \\E(H) &= \{\{v_1, v_2,\ldots, v_{n-1}\}\}. 
\end{align*}

\begin{figure}[H]
    \centering
    \includegraphics[width=0.25\textwidth]{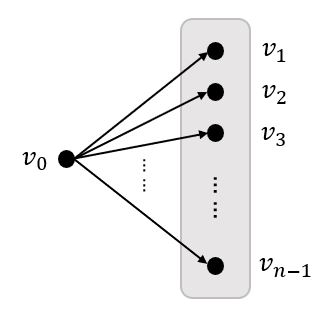}
    \caption{Example in Section \ref{subsec: PPS n/k-apx}. The arcs belong to the digraph $D$ and the hyperedge $\{v_1,\ldots, v_{n-1}\}$ belongs to the hypergraph $H$.}
    \label{fig: PPS n/k-apx}
\end{figure}

For every subset $S\subseteq V$, we will use $d_D^{in}(S)$ to denote the number of arcs in $D$ whose tails are in $\bar{S}$ and heads are in $S$. We will use $d_H(S)$ to denote the number of hyperedges in $H$ that have at least one vertex in $S$ and one vertex in $\bar{S}$. 
Next, we define a set function $f:2^V\to\R_{\geq 0}$ by
\[f(S):=a\cdot d^{in}_D(S)+d_H(S)\quad\forall S\subseteq V,\]
where $a\gg 1$ is a large constant. We note that $f$ is submodular because it is a positive linear combination of two submodular functions (and it is not monotone/symmetric/posimodular).

\begin{claim}\label{claim: PPS n/k-apx}
The principal partition sequence of $f$ is $\{V\}, \{\{v_i\}:i\in\{0,1,\ldots, n-1\}\}$.
\end{claim}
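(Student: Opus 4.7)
The plan is to apply Proposition~\ref{prop:unique-principal-partition} with $\mcq := \{\{v\} : v \in V\}$, so my job reduces to verifying that
\[
\frac{f(\mcp) - f(V)}{|\mcp|-1} \;>\; \frac{f(\mcq) - f(V)}{|V|-1}
\]
for every partition $\mcp \neq \{V\}, \mcq$.

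First I would evaluate the benchmark. Since no arc has its tail outside $V$ and no hyperedge crosses $V$, we get $f(V) = 0$. The vertex $v_0$ has in-degree $0$ in $D$ and lies outside the unique hyperedge, so $f(\{v_0\}) = 0$, while $f(\{v_i\}) = a+1$ for each $i \neq 0$. Hence $f(\mcq) = (n-1)(a+1)$, and the right-hand side of the inequality above simplifies to $a+1$. So it suffices to show $f(\mcp) > (a+1)(|\mcp|-1)$ for every partition $\mcp \neq \{V\}, \mcq$.

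For such a $\mcp = \{P_1, \ldots, P_m\}$ with (say) $v_0 \in P_1$ and $s := |P_1|$, I would compute $f(\mcp)$ by splitting it into its digraph contribution $a \sum_i d_D^{in}(P_i)$ and its hypergraph contribution $\sum_i d_H(P_i)$. Because every arc of $D$ has tail $v_0$, the digraph contribution equals $a(n-s)$---one unit per non-$v_0$ vertex that lies outside $P_1$. For the hypergraph contribution, since $V \setminus e = \{v_0\}$ for the unique hyperedge $e = \{v_1, \ldots, v_{n-1}\}$, the part $P_1$ meets $e$ iff $s \geq 2$, every other part automatically meets $e$, and no part can contain $e$ unless $\mcp = \{V\}$. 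Thus the hypergraph contribution equals $m$ when $s \geq 2$, equals $m-1$ when $P_1 = \{v_0\}$ and $m \geq 3$, and vanishes in the single remaining case $\mcp = \{\{v_0\}, e\}$.

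Finally, I would verify the target inequality in each of these three subcases. When $s \geq 2$, the inequality becomes $a(n-s-m+1) + 1 > 0$, which holds because the $m-1$ parts outside $P_1$ occupy exactly $n-s$ elements and each is nonempty, forcing $n - s - m + 1 \geq 0$. When $P_1 = \{v_0\}$ and $m \geq 3$, the inequality reduces to $n > m$, which follows from $\mcp \neq \mcq$. The only mildly delicate case is $\mcp = \{\{v_0\}, e\}$, where the hypergraph contribution vanishes and the inequality collapses to $a(n-2) > 1$; this is where the hypothesis $a \gg 1$ is used (together with the implicit assumption $n \geq 3$ in the construction). I do not anticipate any substantive obstacle beyond this casework; the key bookkeeping subtlety is correctly isolating the partition $\{\{v_0\}, e\}$ as the only configuration where the hyperedge contributes nothing, and observing that the large weight $a$ on the digraph term compensates there.
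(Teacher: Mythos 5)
Your proof is correct and takes essentially the same route as the paper: both invoke Proposition~\ref{prop:unique-principal-partition} (using $f(V)=0$) to reduce the claim to showing $f(\mcp) > (a+1)(|\mcp|-1)$ for every partition $\mcp\neq\{V\},\mcq$, which the paper establishes by a uniform per-part lower bound (giving $f(\mcp)\ge (1+a)(|\mcp|-1)+1$) and you establish by an exact computation of the digraph and hypergraph contributions followed by a three-way case split. One small prose slip: your statement that ``no part can contain $e$ unless $\mcp=\{V\}$'' is contradicted by $\mcp=\{\{v_0\},e\}$, which is exactly the exceptional case your enumeration then correctly isolates, so the final casework is right even though that intermediate sentence is not.
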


\begin{proof}
For convenience, we will use $\mcq$ to denote the partition of $V$ into singletons.
By Proposition \ref{prop:unique-principal-partition} and the fact that $f(V)=0$, it suffices to prove that for every partition $\mcp$ of $V$ such that $\mcp$ is not $\mcq$ or $\{V\}$, we have that 
\begin{align}
    \frac{f(\mcp)}{|\mcp|-1}>\frac{f(\mcq)}{n-1}.\label{eqn: PPS n/k-apx -2}
\end{align}
Let $P_0\in \mcp$ be the part that contains $v_0$. Then, we have that $f(P_0)\ge 1$ if $P_0\neq \{v_0\}$ and $f(P_0)=0$ otherwise. For each part $P\in \mcp$ that does not contain $v_0$, we have that $f(P)\ge 1+a$ if $|P|=1$ and $f(P)\ge 2+a$ if $|P|\ge 2$. Since $\mcp \neq \mcq$, we have that either $P_0\neq \{v_0\}$ or at least one of the parts $P\in \mcp\setminus \{P_0\}$ has size $|P|\ge 2$. Thus, $f(\mcp)=\sum_{P\in \mcp}f(P)\ge (1+a)(|\mcp|-1)+1$.
Moreover, we have $f(\mcq)=(1+a)(n-1)$ and hence
\begin{align*}
    \frac{f(\mcp)}{|\mcp|-1}&\geq\frac{(1+a)(|\mcp|-1)+1}{|\mcp|-1}=1+a+\frac{1}{|\mcp|-1}
    >1+a=\frac{(1+a)(n-1)}{n-1}=\frac{f(\mcq)}{n-1}.
\end{align*}
This proves inequality \eqref{eqn: PPS n/k-apx -2}. 
\end{proof}

\begin{claim}
The approximation factor of Algorithm 1 on input $(f, k)$ is $\Omega(n/k)$.
\end{claim}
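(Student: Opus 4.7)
The plan is to exploit the fact, established in Claim \ref{claim: PPS n/k-apx}, that the principal partition sequence of $f$ consists of only two partitions: $\mcp_1 = \{V\}$ and $\mcp_2 = \mcq$, where $\mcq$ is the singleton partition. Since $|\mcp_1| = 1 < k < n = |\mcp_2|$ (assuming $2 \leq k \leq n-1$), Algorithm 1 enters its refinement branch with $i = 2$, $S = V$, and $\mcp' = \mcq$.

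First I would compute the singleton function values to determine the ordering in Algorithm 1. The only vertex with no in-arc in $D$ is $v_0$, and $v_0$ also lies outside the hyperedge, so $f(\{v_0\}) = 0$; every other singleton $\{v_i\}$ satisfies $f(\{v_i\}) = a + 1$. Thus Algorithm 1 chooses $B_1 = \{v_0\}$ together with $k-2$ arbitrary singletons from $\{v_1,\ldots,v_{n-1}\}$ as the $k-1$ cheapest parts, and lumps the remaining $n-k+1$ vertices into the final part $T := \bigcup_{j=k}^n B_j \subseteq \{v_1,\ldots,v_{n-1}\}$. The set $T$ satisfies $d_D^{in}(T) = n-k+1$ and $d_H(T) = 1$, so $f(T) = a(n-k+1) + 1$. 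Summing, the algorithm's output has cost
\[
f(\mcp) \;=\; 0 \;+\; (k-2)(a+1) \;+\; a(n-k+1) + 1 \;=\; a(n-1) + (k-1).
\]

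Next I would exhibit a $k$-partition of substantially smaller value. Take $P_0 := V \setminus \{v_{i_1}, \ldots, v_{i_{k-1}}\}$ for any $k-1$ indices $i_1, \ldots, i_{k-1} \in [n-1]$, together with the $k-1$ singletons $\{v_{i_1}\}, \ldots, \{v_{i_{k-1}}\}$. Since $v_0 \in P_0$, no arc of $D$ crosses into $P_0$, so $d_D^{in}(P_0) = 0$, and $d_H(P_0) = 1$; hence $f(P_0) = 1$. Each singleton contributes $a+1$, giving total cost $a(k-1) + k$, which upper bounds the optimum.

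Finally, the approximation ratio on this instance is at least
\[
\frac{a(n-1) + (k-1)}{a(k-1) + k},
\]
which tends to $\frac{n-1}{k-1}$ as $a \to \infty$ (and holds for any fixed sufficiently large $a$). This is $\Omega(n/k)$, completing the proof. I do not anticipate a serious obstacle; the only care needed is in the bookkeeping for $f(T)$ and in confirming that Algorithm 1 indeed picks $\{v_0\}$ first (due to $f(\{v_0\}) = 0$ being strictly smaller than every other singleton value), which is what forces the expensive part $T$ to collect all the remaining heads of arcs from $v_0$.
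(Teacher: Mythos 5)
Your proof follows the paper's own argument: the same instance $f$, the same observation that $\{v_0\}$ is the unique cheapest singleton and is therefore selected by Algorithm 1, and the same witness $k$-partition $\{\{v_1\},\ldots,\{v_{k-1}\},V\setminus\{v_1,\ldots,v_{k-1}\}\}$ of cost $a(k-1)+k$; the only difference is that you compute the algorithm's output cost exactly, whereas the paper lower-bounds it via submodularity as $f(\mcp)\geq f(V\setminus\{v_0\})=a(n-1)$, sidestepping any case bookkeeping on $d_H(T)$. Note one small slip in your exact count: when $k=2$ the leftover part $T=\{v_1,\ldots,v_{n-1}\}$ coincides with the hyperedge, so $d_H(T)=0$ and $f(\mcp)=a(n-1)$ rather than $a(n-1)+1$; this does not affect the $\Omega(n/k)$ conclusion.
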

\begin{proof}
We note that $f(\{v_0\})=0$ and $f(\{v_i\})=1+a$ for all $i\in[n-1]$.
By Claim \ref{claim: PPS n/k-apx}, on input $(f, k)$, Algorithm 1 returns a partition $\mcp$ consisting of the $k-1$ singleton parts that minimizes $f$ among all singleton sets and the complement of the union of these $k-1$ singleton parts. Therefore, the returned partition $\mcp$ contains $\{v_0\}$ as a part and thus
\[f(\mcp)= \sum_{P\in\mcp:P\neq\{v_0\}}f(P) \geq f(V-\{v_0\})=a(n-1).\]
The second inequality follows from submodularity of the function $f$.
Consider the $k$-partition \linebreak $\{\{v_1\},\{v_2\},\ldots, \{v_{k-1}\},V-\{v_1,\ldots, v_{k-1}\}\}$, which has objective $(1+a)(k-1)+1$. This implies that the optimum $k$-partition $\mcp^\ast$ satisfies $f(\mcp^\ast)\leq (1+a)(k-1)+1$. Thus, the approximation factor of the solution returned by Algorithm 1 is
\begin{align*}
    \frac{f(\mcp)}{f(\mcp^\ast)}\geq \frac{a(n-1)}{(1+a)(k-1)+1}\to\frac{n-1}{k-1}\text{ as }a\to\infty.
\end{align*}
\end{proof}

\section{Conclusion}
\label{sec:conclusion}
The principal partition sequence of submodular functions was shown to  exist by Narayanan \cite{Na91}. 
The principal partition sequence of submodular functions is known in the literature as \emph{principal lattice of partitions} of submodular functions since there exists a lattice structure associated with the sequence of partitions \cite{Cun85, Narayanan-book, DNP03, PN03, Bar00, Kol10, NKI10}. We chose to call it as principal partition sequence in this work since the sequence suffices for our purpose. 
Narayanan, Roy, and Patkar \cite{NRP96} used the principal partition sequence to design an algorithm for submodular $k$-partition. They analyzed the approximation factor of their algorithm for certain subfamilies of submodular functions that arise from hypergraphs.  In this work, we investigated the approximation factor of their algorithm for three broad subfamilies of submodular functions---namely monotone, symmetric, and posimodular submodular functions. Our results show that the principal partition sequence based algorithm achieves the best possible asymptotic approximation factor for all these three subfamilies. A novelty of our contributions is the improvement in the approximability of monotone submodular $k$-partition from $2$ to $4/3$, thus matching the inapproximability threshold. It would be interesting to pin down the approximability of special cases of monotone submodular $k$-partition---e.g.,  matroid $k$-partition and coverage $k$-partition
which are interesting by themselves since they capture several interesting partitioning problems.

\paragraph{Acknowledgements.} Karthekeyan Chandrasekaran would like to thank Chandra Chekuri for asking about the approximation factor of the principal partition sequence based approach for symmetric submodular $k$-partition. 

\bibliographystyle{amsplain}
\bibliography{references}

\end{document}